\newtheorem{theorem}{Theorem}
\newtheorem{lemma}{Lemma}
\def\eqref#1{equation~\ref{#1}}
\def\1{\bm{1}}
\def\va{{\bm{a}}}
\def\vb{{\bm{b}}}
\def\vc{{\bm{c}}}
\def\vf{{\bm{f}}}
\def\vh{{\bm{h}}}
\def\vn{{\bm{n}}}
\def\vq{{\bm{q}}}
\def\vu{{\bm{u}}}
\def\vx{{\bm{x}}}
\def\mO{{\bm{O}}}
\DeclareMathAlphabet{\mathsfit}{\encodingdefault}{\sfdefault}{m}{sl}
\SetMathAlphabet{\mathsfit}{bold}{\encodingdefault}{\sfdefault}{bx}{n}
\title{Full-Atom Protein Pocket Design via\\ Iterative Refinement}
\author{%
	Zaixi Zhang$^{1,2, 4}$, Zepu Lu$^{1,2}$, Zhongkai Hao$^3$, Marinka Zitnik$^4$, Qi Liu$^{1,2}$\thanks{Qi Liu is the corresponding author.}\\
	$^1$ Anhui Province Key Lab of Big Data Analysis and Application,\\ School of Computer Science and Technology, University of Science and Technology of China\\$^2$ State Key Laboratory of Cognitive Intelligence, Hefei, Anhui, China\\ $^3$ Dept. of Comp. Sci. and Tech., Institute for AI, THBI Lab, BNRist Center,\\ Tsinghua-Bosch Joint ML Center, Tsinghua, $^4$ Harvard University\\
	\{zaixi, zplu\}@mail.ustc.edu.cn, hzj21@mails.tsinghua.edu.cn, \\marinka@hms.harvard.edu, qiliuql@ustc.edu.cn
}
\begin{document}

\maketitle

\begin{abstract}
The design of \emph{de novo} functional proteins that bind specific ligand molecules is paramount in therapeutics and bio-engineering. A critical yet formidable task in this endeavor is the design of the protein pocket, which is the cavity region of the protein where the ligand binds. Current methods are plagued by inefficient generation, inadequate context modeling of the ligand molecule, and the inability to generate side-chain atoms.
Here, we present the Full-Atom Iterative Refinement (FAIR) method, designed to address these challenges by facilitating the co-design of protein pocket sequences, specifically residue types, and their corresponding 3D structures.
FAIR operates in two steps, proceeding in a coarse-to-fine manner (transitioning from protein backbone to atoms, including side chains) for a full-atom generation. In each iteration, all residue types and structures are simultaneously updated, a process termed full-shot refinement.
In the initial stage, the residue types and backbone coordinates are refined using a hierarchical context encoder, complemented by two structure refinement modules that capture both inter-residue and pocket-ligand interactions. The subsequent stage delves deeper, modeling the side-chain atoms of the pockets and updating residue types to ensure sequence-structure congruence. Concurrently, the structure of the binding ligand is refined across iterations to accommodate its inherent flexibility.
Comprehensive experiments show that FAIR surpasses existing methods in designing superior pocket sequences and structures, producing average improvement exceeding 10\% in AAR and RMSD metrics. FAIR is available at \url{https://github.com/zaixizhang/FAIR}.
\end{abstract}

\section{Introduction}
Proteins are macromolecules that perform fundamental functions in living organisms. An essential and challenging step in functional protein design, such as enzymes \cite{rothlisberger2008kemp, doi:10.1126/science.1152692} and biosensors \cite{bick2017computational, glasgow2019computational}, is to design protein pockets that bind specific ligand molecules. Protein pockets usually refer to residues binding with the ligands in the protein-ligand complexes. 
Nevertheless, such a problem is very challenging due to the
tremendous search space of both sequence and structure and the small solution space that satisfies the constraints of binding affinity, geometric complementarity, and biochemical stability. Traditional approaches for pocket design
use hand-crafted energy functions \cite{malisi2012binding, dou2017sampling} and template-matching methods \cite{zanghellini2006new, chen2022depact, polizzi2020defined}, which can be less accurate and lack efficiency.

Deep learning approaches, especially deep generative models, have made significant progress in protein design \cite{ovchinnikov2021structure, gao2020deep, dauparas2022robust, Hsu2022.04.10.487779} using experimental and predicted 3D structure information to design amino acid sequences  \cite{ingraham2019generative, jing2020learning, cao2021fold2seq, ferruz2022controllable}. Despite their strong performance, these approaches are not directly applicable to pocket design because the protein pocket structures to be designed are apriori unknown \cite{chen2022depact}. To this end, generative models with the capability to co-design
both the sequence and structure of protein pockets are needed. 
In another area of protein design, pioneering research developed techniques to co-design sequence and structure of complementarity determining regions (CDRs) for antibodies (Y-shaped proteins that bind with specific pathogens) \cite{jin2021iterative, jin2022antibody, luo2022antigen, kong2023conditional, shi2022protein}.
Nevertheless, these approaches are tailored specifically for antibodies and are unsuitable for the pocket design of general proteins. For example, prevailing methods generate only protein backbone atoms while neglecting side-chain atoms, which play essential roles in protein pocket-ligand interactions \cite{marcou2007optimizing}. Further, binding targets (antigens) considered by these methods are fixed, whereas binding ligand molecules are more flexible and require additional modeling. 

Here, we introduce a full-atom iterative refinement approach
(FAIR) for protein pocket sequence-structure co-design, addressing the limitations of prior research. FAIR consists of two primary steps: initially modeling the pocket backbone atoms to generate coarse-grained structures, followed by incorporating side-chain atoms for a complete full-atom pocket design. 
(1) In the first step, since the pocket residue types and the number of side-chain atoms are largely undetermined, FAIR progressively predicts and updates backbone atom coordinates and residue types via soft assignment over multiple rounds.  
Sequence and structure information are jointly updated in each round for efficiency, which we refer to as full-shot refinement. 
A hierarchical encoder with residue- and atom-level encoders captures the multi-scale structure of protein-ligand complexes. To update the pocket structure, FAIR leverages two structure refinement modules to predict the interactions within pocket residues and between the pocket and the binding ligand in each round. 
(2) In the second step, FAIR initializes residue types based on hard assignment and side-chain atoms based on the results of the first step. The hierarchical encoder and structure refinement modules are similar to the first step for the full-atom structure update. 
To model the influence of structure on residue types, FAIR randomly masks a part of the pocket residues and updates the masked residue types based on the neighboring residue types and structures in each round. After several rounds of refinement, the residue type and structure update gradually converge, and the residue sequence-structure becomes consistent. 
The ligand molecular structure is also updated along with the above refinement processes to account for the flexibility of the ligand.

Overall, FAIR is an E(3)-equivariant generative model that generates.
both pocket residue types and the pocket-ligand complex 3D structure. Our study presents the following main contributions:
\begin{itemize}[leftmargin=*]
    \item \textbf{New task:} We formulate protein pocket design as a co-generation of both the sequence and structure of a protein pocket, conditioned on a binding ligand molecule and the protein scaffold. 
    \item \textbf{Novel method:} FAIR is an end-to-end generative framework that co-designs the pocket sequence and structure through iterative refinement. It addresses the limitations of previous approaches and takes into account side-chain effects, ligand flexibility, and sequence-structure consistency to enhance prediction efficiency. 
    \item \textbf{Strong performance:} FAIR outperforms existing methods on various protein pocket design quality metrics, producing average improvements of 15.5\% (AAR) and 13.5\% (RMSD).
    FAIR's pocket-generation process is over ten times faster than traditional methods.
\end{itemize}

\section{Related Work}
\textbf{Protein Design.}
Computational protein design has brought the capability of inventing
proteins with desired structures and functions \cite{huang2016coming, korendovych2020novo, street1999computational}. 
Existing studies focus on designing amino acid sequences that fold into target protein structures, i.e., inverse protein folding. 
Traditional approaches mainly rely on hand-crafted energy functions to iteratively search for low-energy protein sequences with heuristic sampling algorithms, which are inefficient, less accurate, and less generalizable \cite{alford2017rosetta, boas2007potential}. Recent deep generative approaches have achieved remarkable success in protein design \cite{ferruz2022protgpt2, ingraham2019generative, jing2020learning, cao2021fold2seq, ferruz2022controllable, rives2021biological, Hsu2022.04.10.487779, anishchenko2021novo}. Some protein design methods also benefit from the latest advances in protein structure prediction techniques, e.g., AlphaFold2 \cite{jumper2021highly}. 
Nevertheless, they usually assume the target protein structure is given, which is unsuitable for our pocket design case. 
Another line of recent works proposes to co-design the sequence and 3D structure of antibodies as graphs \cite{jin2021iterative, jin2022antibody, luo2022antigen, kong2023conditional, shi2022protein}. However, they are designed explicitly for antibody CDR topologies and are primarily unable to consider side-chain atoms and the flexibility of binding targets.  
For more detailed discussions on protein design, interested readers may refer to the comprehensive surveys \cite{huang2016coming, korendovych2020novo}.

\textbf{Pocket Design.} 
Protein pockets are the cavity region of protein where the ligand binds to the protein. In projects to design functional proteins such as biosensors \cite{rothlisberger2008kemp, doi:10.1126/science.1152692} and enzymes \cite{bick2017computational, glasgow2019computational}, pocket design is the central task \cite{chen2022depact, polizzi2020defined}. Compared with the general protein design/antibody design problems, the pocket design brings unique challenges: (1) protein pocket has a smaller spatial scale than whole proteins/antibodies. Therefore, the structure and interactions of side-chain atoms play essential roles in pocket design and require careful modeling \cite{kuhlman2019advances, du2016insights}; (2) the influence and structural flexibility of target ligand molecules need to be considered to achieve high binding affinity \cite{malisi2012binding}. Traditional methods for pocket design are physics-based methods \cite{malisi2012binding, dou2017sampling} or template-matching methods \cite{zanghellini2006new, chen2022depact, polizzi2020defined}, which often suffer from the large computational complexity and limited accuracy.
So far, the pocket design problem has rarely been tackled with deep learning. 

\textbf{Structure-Based Drug Design.} 
Structure-based drug design is another line of research inquiry that explores 3D ligand molecule design inside protein pockets \cite{ragoza2022generating, luo2021autoregressive, liu2022generating, peng2022pocket2mol, zhang2023molecule, zhang2023equivariant, guan2023d}. These studies can be regarded as the \emph{dual} problem of pocket design studied in our paper. However, the methods primarily focus on generating 3D molecular graphs based on a fixed protein pocket structure and may not be easily adapted to the pocket sequence-structure co-design problem examined in our study. We refer interested readers to \cite{zhang2023systematic} for a comprehensive survey on structure-based drug design.

\section{Full-Atom Protein Pocket Design}
\begin{figure*}[t]
	\centering
	\includegraphics[width=0.98\linewidth]{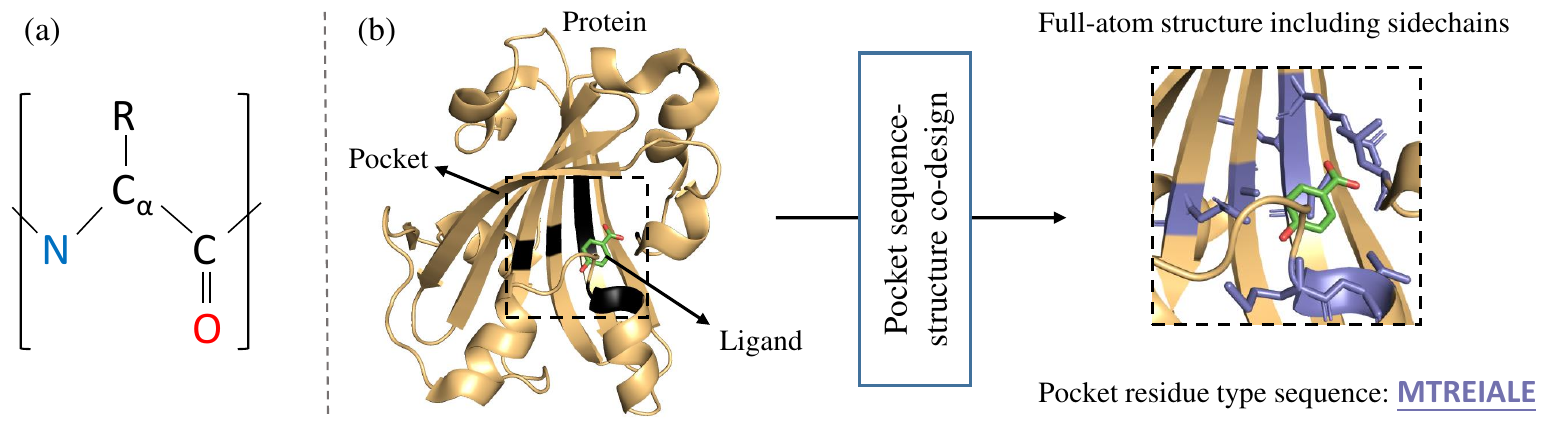}
	\caption{(a) The residue structure, where the backbone atoms are $C_\alpha, N, C, O$. $R$ represents a side chain that determines the residue types. (b) The protein pocket design problem. Pocket (colored in black) consists of a subsequence of residues closest to the binding ligand molecule (formal definition in Sec.~\ref{problem formulation}). FAIR aims to co-design the full-atom structures and residue types of the pocket.}
	\label{illustration}
\end{figure*}

Next, we introduce FAIR, our method for pocket sequence-structure co-design. 
The problem is first formulated in Sec.~\ref{problem formulation}. We then describe the hierarchical encoder that captures residue- and atom-level information in Sec.~\ref{encoder}. Furthermore, we show the iterative refinement algorithms with two main steps in Sec.~\ref{iterative refinement}. Finally, we show the procedures for model training and sampling in Sec.~\ref{training and sampling} and demonstrate the E(3)-equivariance property of FAIR in Sec.~\ref{sec:equivariance}.  

\subsection{Notation and Problem Formulation}
\label{problem formulation} 
We aim to co-design residue types (sequences) and 3D structures of the protein pocket that can fit and bind with the target ligand molecule.
As shown in Figure~\ref{illustration}, a protein-ligand binding complex typically consists of a protein and a ligand molecule. The ligand molecule can be represented as a 3D point cloud $\mathcal{M} = \{(\bm{x}_k, \bm{v}_k)\}_{k=1}^{N_l}$ where $\bm{x}_k$ denotes the 3D coordinates and $\bm{v}_k$ denotes the atom feature. The protein can be represented as a sequence of residues $\mathcal{A} = \bm a_1\cdots \bm a_{N_s}$, where $N_s$ is the length of the sequence. The 3D structure of the protein can be described as a point cloud of protein atoms $\{\bm a_{i, j}\}_{1\le i\le N_s, 1\le j \le n_i}$ and let $\bm{x}(\bm a_{i, j}) \in \mathbb{R}^3 $ denote the 3D coordinate of the pocket atoms. $n_i$ is the number of atoms in a residue determined by the residue types. The first four atoms in any residue correspond to its backbone atoms $(C_\alpha, N, C, O)$, and the rest are its side-chain atoms. 
In our work, the protein pocket is defined as a set of residues in the protein closest to the binding ligand molecule: $\mathcal{B} = \bm b_1\cdots \bm b_m$. The pocket $\mathcal{B}$ can thus be represented as an amino acid subsequence of a protein: $\mathcal{B} = \bm a_{e_1}\cdots \bm a_{e_m}$ where $\bm e = \{e_1, \cdots, e_m \}$ is the index of the pocket residues in the protein. 
The index $\bm e$ can be formally given as: $\bm e = \{i~| \mathop{\rm min}\limits_{1 \le j \le n_i, 1 \le k \le N_l}\|\bm{x}(\bm a_{i, j})- \bm{x}_k \|_2 \le \delta \}$, where $\|\cdot\|_2$ is the $L_2$ norm and $\delta$ is the distance threshold. According to the distance range of pocket-ligand interactions \cite{marcou2007optimizing}, we set $\delta = 3.5$ {\AA} in the default setting. Note that the number of residues (i.e., $m$) varies for pockets due to varying structures of protein-ligand complexes. To summarize this subsection, our objective is to learn a generative model $p_\theta(\mathcal{B}|\mathcal{A}\setminus \mathcal{B}, \mathcal{M})$ for protein pocket generation conditioned on the rest residues in the protein $\mathcal{A}\setminus \mathcal{B}$ and the binding ligand $\mathcal{M}$.  

\subsection{Hierarchical Encoder}
\label{encoder}
Leveraging this intrinsic multi-scale protein structure~\cite{stoker2015general}, we adopt a hierarchical graph transformer \cite{min2022transformer, luo2022one, zhang2022hierarchical} to encode the hierarchical context information of protein-ligand complexes for pocket sequence-structure co-design. The transformer includes both atom- and residue-level encoders, as described below.  
\subsubsection{Atom-Level Encoder}
The atom-level encoder considers all atoms in the protein-ligand complex and constructs a 3D context graph connecting the $K_a$ nearest neighboring atoms in $\mathcal{A} \bigcup \mathcal{M}$.
The atomic attributes are firstly mapped to node embeddings ${\bm{h}^{(0)}_i}$ with two MLPs, respectively, for protein and ligand molecule atoms. 
The edge embeddings $\bm{e}_{ij}$ are obtained by processing pairwise Euclidean distances with Gaussian kernel functions \cite{schlichtkrull2018modeling}.
The 3D graph transformer consists of $L$ Transformer layers \cite{vaswani2017attention} with two modules in each layer: a multi-head self-attention (MHA) module and a position-wise feed-forward network (FFN). Specifically, in the MHA module of the $l$-th layer $(1 \le l \le L)$, 
the queries are derived from the current node embeddings $\bm{h}_i^{(l)}$ while the keys and values come from the relational representations: $\bm{r}_{ij}^{(l)} = [\bm{h}_j^{(l)}, \bm{e}_{ij}^{(l)}]$ ($[\cdot, \cdot])$ denotes concatenation) from neighboring nodes:
\begin{equation}
    \bm{q}_i^{(l)}= \mathbf{W}_{Q} \bm{h}_i^{(l)},~ \bm{k}_{ij}^{(l)} = \mathbf{W}_{K} \bm{r}_{ij}^{(l)},~ \bm{v}_{ij}^{(l)} = \mathbf{W}_{V} \bm{r}_{ij}^{(l)},
\end{equation}
where $\mathbf{W}_{Q}, \mathbf{W}_{K}$ and $\mathbf{W}_{V}$ are learnable transformation matrices.
Then, in each attention head $c \in \{1,2,\dots, C\}$ ($C$ is the total number of heads), the scaled dot-product attention mechanism works as follows:
\begin{equation}
    \text{head}_i^c = \sum_{j\in\mathcal{N}(i)}\operatorname{Softmax}\left(\frac{{\bm{q}_i^{(l)}}^\top \cdot \bm{k}_{ij}^{(l)}}{\sqrt{d}}\right)\bm{v}_{ij}^{(l)},
\end{equation}
where $\mathcal{N}(i)$ denotes the neighbors of the $i$-th atom in the constructed graph and $d$ is the dimension size of embeddings for each head. Finally, the outputs from different heads are further concatenated and transformed to obtain the final output of MHA:
\begin{equation}
    \operatorname{MHA}_i=\left[{\rm head}_{i}^1, \ldots, {\rm head}_{i}^C\right] \mathbf{W}_{O},
\end{equation}
where $\mathbf{W}_{O}$ is the output transformation matrix. The output of the atom-level encoder is a set of atom representations $\{\vh_i\}$. Appendix~\ref{app:details} includes the details of the atom-level encoder.

\subsubsection{Residue-Level Encoder}
The residue-level encoder only keeps the C$_\alpha$ atoms of residues and
a coarsened ligand node at the ligand's center of mass to supplement binding context information (the coarsened ligand node is appended at the end of the residue sequence as a special residue). 
Then, a $K_r$ nearest neighbor graph is constructed at the residue level. The $i$-th residue $\va_i$ can be represented by a feature vector $\vf_{\va_i}^{res}$ describing its geometric and chemical characteristics such as volume, polarity, charge, hydropathy, and hydrogen bond interactions according to its residue type $\va_i$. 
We use the original protein
pocket backbone atoms for reference to construct the $K_r$ nearest neighbor graph and help calculate
geometric features
We concatenate the residue features with the sum of atom-level embeddings $\vh_k$ within each residue to initialize the residue representations; the ligand node representation is initialized by summing up all the ligand atom embeddings:
\begin{equation}
    \hat{\vf}_i^{res} = \left[\vf_{\va_i}^{res}, \sum_{j \in res_i}\nolimits \bm{h}_j\right]; ~~\hat{\vf}_i^{lig} = \sum_{k \in \mathcal{M}}\nolimits \bm{h}_k.
\end{equation}
As for the edge features $\bm{e}_{ij}^{res}$, 
local coordinate frames are built for residues, and $\bm{e}_{ij}^{res}$ are computed according to the distance, direction, and orientation between neighboring residues~\cite{ingraham2019generative}. 
Lastly, the encoder takes the node and edge features into the residue-level graph transformer to compute the residue-level representations. The residue-level graph transformer architecture is similar to that of the atom-level encoder, and details are shown in Appendix~\ref{app:details}.

In summary, the output of our hierarchical encoder is a set of residue representations $\{\vf_i\}$ and atom representations $\{\vh_i\}$, which are used in the following subsections for structure refinement and residue type prediction. Since our encoder is based on the atom/residue attributes and pairwise relative distances, the obtained representations are E(3)-invariant \footnote{E(3) is the group of Euclidean transformations: rotations, reflections, and translations.}. 

\begin{figure*}[t]
	\centering
	\includegraphics[width=0.99\linewidth]{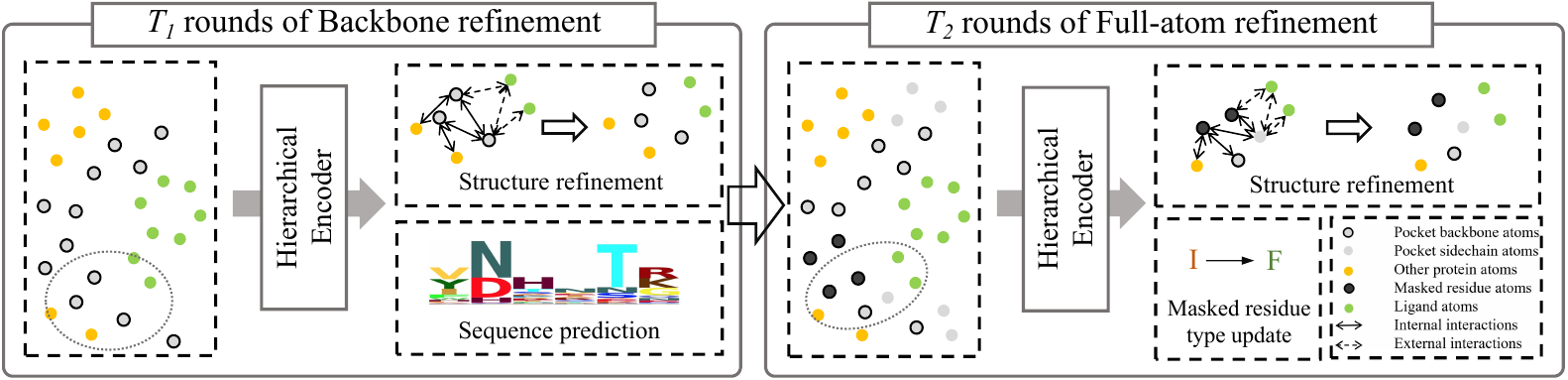}
	\caption{Overview of FAIR with two main steps ($T_1$ rounds of backbone refinement and $T_2$ rounds of full-atom refinement). FAIR co-designs pocket residue types and structures via iterative full-shot refinement. The structure refinement is illustrated with the atoms in the dotted ovals as examples. }
	\label{FAIR}
\end{figure*}

\subsection{Iterative Refinement}
For efficiency, FAIR predicts pocket residue types and structures via an iterative full-shot refinement scheme: all the pocket residue types and structures are updated together in each round. Since the pocket residue types and the number of side-chain atoms are largely unknown at the beginning rounds, FAIR is designed to have two main steps that follow a coarse-to-fine pipeline:  
FAIR firstly only models the backbone atoms of pockets to generate the coarse-grained structures and then fine-adjusts full-atom residues to achieve sequence-structure consistency. Figure~\ref{FAIR} shows the overview of FAIR.
\label{iterative refinement}
\subsubsection{Backbone Refinement}
\label{backbone refinement}
\textbf{Initialization.} 
Before backbone refinement, the residue types to predict are initialized with uniform distributions over all residue type categories.
Since the residue types and the number of side-chain atoms are unknown, we only initialize the 3D coordinates of backbone atoms $(C_\alpha, N, C, O)$ with linear interpolations and extrapolations based on the known structures of nearest residues in the protein. Details are shown in the Appendix~\ref{interpolation}. Following previous research \cite{chen2022depact, jin2022antibody, kong2023conditional}, we initialize the ligand structures with the reference structures from the dataset in the generation process. 

\textbf{Residue Type Prediction.}
In each round of iterative refinement, we first encode the current protein-ligand complex with the hierarchical encoder in Sec.~\ref{encoder}. We then predict and update the sequences and structures.
To predict the probability of each
residue type, we apply an MLP and SoftMax function on the obtained pocket residue embeddings $\vf_i$
: $p_i = {\rm Softmax(MLP(}\vf_i))$, where $p_i \in \mathbb{R}^{20}$ is the predicted distribution over 20 residue categories. The corresponding residue features can then be updated with weighted averaging: $\vf_{\vb_i}^{res} \leftarrow \sum_j p_{i,j} \vf_{j}^{res}$, where $p_{i,j}$ and $\vf_{j}^{res}$ are the probability and feature vector for the $j$-th residue category respectively.

\textbf{Structure Refinement.} 
Motivated by the force fields in physics \cite{rappe1992uff} and the successful practice in previous works \cite{jin2022antibody, kong2023conditional}, we calculate the interactions between atoms for the coordinate update.
According to previous works \cite{klein2002long, du2016insights, dobson2003protein}, the internal interactions within the protein and the external interactions between protein and ligand are different. Therefore, FAIR uses two separate modules for interaction predictions. 
Firstly, the internal interactions are calculated based on residue embeddings:
\begin{equation}
    {\bm g}_{ij,kj} = g(\vf_{i}, \vf_k, \bm{e}_{\rm {RBF}}(\|\vx(\vb_{i,1}) - \vx(\va_{k,1})\|))_j \cdot \frac{\vx(\vb_{i,j})- \vx(\va_{k,j})}{\|\vx(\vb_{i,j})- \vx(\va_{k,j})\|}, 
    \label{equ.5}
\end{equation}
where $\bm{e}_{\rm {RBF}}(\|\vx(\vb_{i,1}) - \vx(\va_{k,1})\|)$ is the radial basis distance encoding of pairwise $C_\alpha$ distances, the function $g$ is a feed-forward neural network with four output channels for four backbone atoms, and $\frac{\vx(\vb_{i,j})- \vx(\va_{k,j})}{\|\vx(\vb_{i,j})- \vx(\va_{k,j})\|} \in \mathbb{R}^3$ is the normalized vector from the residue atom $\va_{k,j}$ to $\vb_{i,j}$. For the stability of training and generation, we use original protein backbone coordinates to calculate the RBF encodings.
Similarly, for the external interactions between pocket $\mathcal{B}$ and ligand $\mathcal{M}$, we have:
\begin{equation}
    {\bm g}_{ij,s} = g'(\vh_{\vb_{i,j}}, \vh_s, \bm{e}_{\rm {RBF}}(\|\vx(\vb_{i,j})- \vx_s\|))\cdot \frac{\vx(\vb_{i,j})- \vx_s}{\|\vx(\vb_{i,j})- \vx_s\|},
    \label{equ.6}
\end{equation}
which is calculated based on the atom representations. $g'$ is a feed-forward neural network with one output channel. For the conciseness of expression, here we use $\vh_{\vb_{i,j}}$ and $\vh_s$ to denote the pocket and ligand atom embeddings, respectively.
Finally, the coordinates of pocket atoms $\vx(\vb_{i,j})$ and ligand molecule atoms $\vx_s$ are updated as follows:
\begin{align}
\label{pocket update}
    &\vx(\vb_{i,j}) \leftarrow \vx(\vb_{i,j}) + \frac{1}{K_r} \sum_{k \in \mathcal{N}(i)}{\bm g}_{ij,kj} + \frac{1}{N_l} \sum_{s}{\bm g}_{ij,s}, \\
    &\vx_s \leftarrow \vx_s - \frac{1}{|\mathcal{B}|} \sum_{i,j}{\bm g}_{ij,s},~ 1\le i \le m, 1 \le j \le 4, 1 \le s \le N_l,\label{ligand update}
\end{align}
where $\mathcal{N}(i)$ is the neighbors of the constructed residue-level $K_r$-NN graph; ${|\mathcal{B}|}$ denotes the number of pocket atoms. Details of the structure refinement modules are shown in the Appendix~\ref{app:details}.

\subsubsection{Full-Atom Refinement}
\label{full atom refinement}

\textbf{Initialization and Structure Refinement.}
After rounds of backbone refinements, the predicted residue type distribution and the backbone coordinates become relatively stable (verified in experiments). We then apply the full-atom refinement procedures to determine the side-chain structures. The residue types are initialized by sampling the predicted residue type $p_i$ from backbone refinement (hard assignment). The number of side-chain atoms and the atom types can then be determined. The coordinates of the side-chain atoms are initialized with the corresponding $\alpha$-carbon coordinates. The full-atom structure refinement is similar to backbone refinement (Sec.~\ref{backbone refinement}) with minor adjustments to accommodate the side-chain atoms. We include the details in the Appendix~\ref{app:details}.

\textbf{Residue Type Update.} Since the structures and interactions of side-chain atoms can further influence the residue types, proper residue type update algorithms are needed to achieve sequence-structure consistency. Considering the hard assignment of residue types in the full-atom refinement, we randomly sample and mask one residue in each protein pocket and update its residue type based on the surrounding environment in each iteration. Specifically, we mask the residue types, remove the sampled residues' side-chain atoms in the input to the encoder, and use the obtained corresponding residue types for residue type prediction. If the predicted result is inconsistent with the original residue type, we will update the residue type with the newly predicted one and reinitialize the side-chain structures. As our experiments indicate, the residue types and structures become stable after several refinement iterations, and sequence-structure consistency is achieved.

\subsection{Model Training and Sampling}
\label{training and sampling}
The loss function for FAIR consists of two parts. We apply cross-entropy loss $l_{ce}$ for (sampled) pocket residues over all iterations for the sequence prediction. For the structure refinement, we adopt the Huber loss $l_{huber}$ for the stability of optimization following previous works \cite{jin2022antibody, shi2022protein}:
\begin{align}
    &\mathcal{L}_{seq} = \frac{1}{T} \sum_t \sum_i l_{ce}(p_i^{t}, \hat p_i);\\ &\mathcal{L}_{struct} = \frac{1}{T} \sum_t \left[ \sum_i l_{huber}(\vx(\vb_i)^{t}, \hat \vx(\vb_i)) + \sum_j l_{huber}(\vx_j^t, \hat \vx_j)\right],
    \label{huber loss}
 \end{align}
where $T = T_1 + T_2$ is the total rounds of refinement. $\hat p_i, \hat \vx(\vb_i)$, and $\hat{\vx_j}$ are the ground-truth residue types, residue coordinates, and ligand coordinates. $p_i^t, \vx(\vb_i)^t$, and $\vx_j^t$ are the predicted ones at the $t$-th round.  
In the training process, we aim to minimize the sum of the above two loss functions $\mathcal{L} = \mathcal{L}_{seq}+\mathcal{L}_{struct}$ similar to previous works \cite{shi2022protein, kong2023end}.  
After the training of FAIR, we can co-design \textit{de novo }pocket sequences and structures. Algorithm~\ref{alg:training} and \ref{alg:sampling} in Appendix~\ref{app:details} outline model training and sampling.
\subsection{Equivariance}
\label{sec:equivariance}
FAIR is E(3)-equivariant, which is a desirable property of generative protein models \cite{kong2023conditional, xu2022geodiff}: 
\begin{theorem}
Denote the E(3)-transformation as $T_g$ and the generative process of FAIR as $\{(\vb_i, \vx(\vb_i))\}_{i=1}^m = p_\theta(\mathcal{A}\setminus \mathcal{B}, \mathcal{M})$, where $\{(\vb_i, \vx(\vb_i))\}_{i=1}^m$ indicates the designed pocket sequence and structure. We have $\{(\vb_i, T_g \cdot \vx(\vb_i))\}_{i=1}^m = p_\theta(T_g \cdot (\mathcal{A}\setminus \mathcal{B}), T_g \cdot \mathcal{M})$.
\label{theorem}
\end{theorem}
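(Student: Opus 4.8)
The plan is to proceed by induction on the refinement rounds $t = 0, 1, \dots, T$, maintaining a two-part invariant: at the entry of every round the hierarchical representations $\{\vf_i\}, \{\vh_i\}$ and the predicted residue-type distributions $p_i$ are E(3)-\emph{invariant}, while all atom coordinates $\vx(\vb_{i,j})$ and $\vx_s$ are E(3)-\emph{equivariant}, i.e. they transform as $\vx \mapsto T_g\cdot\vx = R\vx + \vt$ with $R \in O(3)$ and $\vt\in\mathbb{R}^3$. The final output sequence is read off from the invariant $p_i$ and the final output structure is read off from the equivariant coordinates, so this invariant immediately delivers the claimed identity $\{(\vb_i, T_g\cdot\vx(\vb_i))\}_{i=1}^m = p_\theta(T_g\cdot(\mathcal{A}\setminus\mathcal{B}), T_g\cdot\mathcal{M})$.

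For the base case I would check the initialization. The residue types start from the uniform distribution, which does not reference coordinates and is trivially invariant; the ligand is initialized from the (already transformed) reference structure and is thus equivariant by construction. The backbone coordinates are initialized by linear interpolation and extrapolation of neighboring residues' coordinates, which are affine combinations with weights summing to one; such combinations commute with $T_g$ (the rotation/reflection $R$ passes through by linearity and the translation $\vt$ passes through precisely because the weights sum to one), so the initialized coordinates are equivariant.

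For the inductive step, assume the invariant holds entering round $t$. The encoder consumes only atom/residue attributes and pairwise distances $\|\vx(\cdot)-\vx(\cdot)\|$, which are preserved by $O(3)$ and unchanged by translation; hence $\{\vf_i\}, \{\vh_i\}$ stay invariant, exactly the encoder invariance already asserted in Sec.~\ref{encoder}. Consequently $p_i = \operatorname{Softmax}(\operatorname{MLP}(\vf_i))$ and the reweighted residue features remain invariant, so the sequence channel is invariant. The crux is the structure update: each interaction term in \eqref{equ.5} and \eqref{equ.6} factorizes as an invariant scalar $g(\cdot)_j$ (a function of invariant representations and invariant distances) multiplied by a normalized difference vector such as $\frac{\vx(\vb_{i,j})- \vx(\va_{k,j})}{\|\vx(\vb_{i,j})- \vx(\va_{k,j})\|}$. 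Under $T_g$ the difference becomes $R(\vx(\vb_{i,j})-\vx(\va_{k,j}))$ since the translations cancel, its norm is unchanged, and the scalar is unchanged, so every ${\bm g}_{ij,kj}$ and ${\bm g}_{ij,s}$ is multiplied by $R$. Feeding these into the coordinate updates \eqref{pocket update} and \eqref{ligand update}, the previous coordinate contributes $R\vx+\vt$ and the summed increments contribute $R(\text{increment})$, so the updated coordinate equals $R(\vx+\text{increment})+\vt = T_g\cdot(\text{updated }\vx)$; equivariance of coordinates is therefore preserved, closing the induction. Because $R \in O(3)$ throughout, reflections are covered automatically, as the displacement vectors are genuine vectors rather than pseudovectors.

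I expect the subtle points, rather than the core computation, to be where care is required: (i) the hard-assignment sampling of residue types and the random masking in the full-atom stage, and (ii) phrasing equivariance of a \emph{distribution} rather than of a deterministic map. For (i), the masking selects residues by index and the sampling draws from the invariant $p_i$, so neither references the coordinate frame; thus for any fixed realization of the internal randomness the input-to-output map is coordinate-equivariant and sequence-invariant. For (ii), coupling the two runs on $T_g$-related inputs through a shared random seed converts the distributional statement into the deterministic one just proved, and since $T_g$ is a measure-preserving bijection on configuration space the two output distributions agree under $T_g$. The main obstacle is accordingly the careful verification that the initialization rule is genuinely affine and that no step injects a coordinate-dependent, non-equivariant quantity such as an absolute position in place of a relative displacement; once those are confirmed, the inductive argument is routine.
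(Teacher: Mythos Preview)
Your proof is correct and takes essentially the same approach as the paper's appendix proof: encoder invariance (its Lemma~1) plus equivariance of the scalar-times-difference-vector coordinate updates (its Lemma~2), with your induction over rounds being the explicit composition of these two facts across the $T_1+T_2$ iterations. You are in fact more careful than the paper, which does not address the initialization (your affine-combination check) or the stochastic sampling/masking steps (your shared-seed coupling argument) at all.
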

\begin{proof}
The main idea is that the E(3)-invariant encoder and E(3)-equivariant structure refinement lead to an E(3)-equivariant generative process of FAIR.
We give the full proof in Appendix~\ref{app:details}.
\end{proof}

\section{Experiments}
We proceed by describing the experimental setups (Sec.~\ref{experimental setting}). 
We then assess our method for ligand-binding pocket design (Sec.~\ref{pocket generation}) on two datasets. We also perform further in-depth analysis of FAIR, including sampling efficiency analysis and ablations in Sec.~\ref{further analysis}. 
\subsection{Experimental Setting}
\label{experimental setting}

\textbf{Datasets.} 
We consider two widely used datasets for experimental evaluations.
\textbf{CrossDocked} dataset \cite{francoeur2020three}
contains 22.5 million protein-molecule pairs generated through cross-docking.
We filter out data points with
binding pose RMSD greater than 1 {\AA}, leading to a refined subset with around 180k data points. 
For data splitting, 
we use mmseqs2 \cite{steinegger2017mmseqs2} to cluster data at 30$\%$ sequence identity, and randomly draw 100k
protein-ligand structure pairs for training and 100 pairs from the remaining clusters for testing and validation, respectively.
\textbf{Binding MOAD} dataset \cite{hu2005binding} contains around 41k experimentally determined protein-ligand complexes. We further filter and split the Binding MOAD dataset based on the proteins’ enzyme commission number \cite{bairoch2000enzyme}, resulting in 40k protein-ligand pairs for training, 100 pairs for validation, and 100 pairs for testing following previous work \cite{schneuing2022structure}. More data split schemes based on graph clustering \cite{CONVERT, DealMVC} may be explored.

Considering the distance ranges of protein-ligand interactions \cite{marcou2007optimizing}, we redesign all the residues that contain atoms within 3.5 {\AA}  of any binding ligand atoms, leading to an average of around eight residues for each pocket.
We sample 100 sequences and structures for each protein-ligand complex in the test set for a comprehensive evaluation.

\textbf{Baselines and Implementation. }
FAIR is compared with five state-of-the-art representative baseline methods. 
\textbf{PocketOptimizer} \cite{noske2023pocketoptimizer} is a physics-based method that optimizes energies such as packing and binding-related energies for ligand-binding protein design. \textbf{DEPACT} \cite{chen2022depact} is a template-matching method that follows a two-step strategy \cite{zanghellini2006new} for pocket design. It first searches the protein-ligand complexes in the database with similar ligand fragments. It then grafts the associated residues into the protein pocket with PACMatch \cite{chen2022depact}. \textbf{HSRN} \cite{jin2022antibody}, \textbf{Diffusion} \cite{luo2022antigen}, and \textbf{MEAN} \cite{kong2023conditional} are deep-learning-based models that use auto-regressive refinement, diffusion model, and graph translation method respectively for protein sequence-structure co-design. They were originally proposed for antibody design, and we adapted them to our pocket design task with proper modifications (see Appendix. \ref{app:baseline details} for more details). 

We use AMBER ff14S force field \cite{maier2015ff14sb} for energy computation and the Dunbrack rotamer library \cite{shapovalov2011smoothed} for rotamer sampling in PocketOptimizer.
For the template-based method DEPACT, the template databases are constructed based on the training datasets for fair comparisons. For the deep-learning-based models, including our FAIR, 
we train them for 50 epochs and select the checkpoint with the lowest loss on the validation set for testing. We use the Adam optimizer with a learning rate of 0.0001 for optimization. In FAIR, the default setting sets $T_1$ and $T_2$ as 5 and 10. Descriptions of baselines and FAIR are provided in the Appendix. \ref{app:baseline details}. The code of FAIR is available at \url{https://github.com/zaixizhang/FAIR}.

\textbf{Performance Metrics.} To evaluate the generated sequences and structures, we employ Amino Acid Recovery (\textbf{AAR}), defined as the overlapping ratio between the predicted 1D sequences and ground truths, and Root Mean Square Deviation (\textbf{RMSD}) regarding the 3D predicted structure of residue atoms. Due to the different residue types and numbers of side-chain atoms in the generated pockets, we calculate RMSD with respect to backbone atoms following \cite{jin2022antibody, kong2023conditional}.
To measure the binding affinity between the protein pocket and ligand molecules, we calculate \textbf{Vina Score} with QVina \cite{trott2010autodock, alhossary2015fast}. 
The unit is kcal/mol; a lower Vina score indicates better binding affinity. 
Before feeding to Vina, all the generated protein structures are firstly refined by AMBER force field \cite{maier2015ff14sb} in OpenMM \cite{eastman2017openmm} to optimize the structures. For the baseline methods that predict backbone structures only (Diffusion and MEAN), we use Rosetta \cite{alford2017rosetta} to do side-chain packing following the instructions in their papers.
\subsection{Ligand-Binding Pocket Design}
\label{pocket generation}
\textbf{Comparison with Baselines.} Table~\ref{tab:main} shows the results of different methods on the CrossDocked and Binding MOAD dataset. FAIR overperforms previous methods with a clear margin on AAR, RMSD, and Vina scores, which verifies the strong ability of FAIR to co-design pocket sequences and structures with high binding affinity. For example, the average improvements on AAR and RMSD are 15.5\% and 13.5\% respectively.
Compared with the energy-based method PocketOptimizer and the template-matching-based method DEPACT, deep-learning-based methods such as FAIR have better performance due to their stronger ability to model the complex sequence-structure dependencies and pocket-ligand interactions. 
However, these deep-learning-based methods still have limitations that restrict their performance improvement. For example, the autoregressive generation manner in HSRN inevitably incurs error accumulation; Diffusion and MEAN fail to model the influence of side chain atoms; all these methods ignore the flexibility of ligands. In contrast, FAIR addresses the limitations with properly designed modules, further validated in Sec.~\ref{further analysis}. Moreover, FAIR performs relatively better on Binding MOAD than the CrossDocked dataset, which may be explained by the higher quality and diversity of Binding MOAD \cite{hu2005binding} (e.g. Vina score on CrossDocked -7.022 vs. -7.978 on Binding MOAD). For example, Binding MOAD has 40k unique protein-ligand complexes, while CrossDocked only contains 18k complexes. 

\textbf{Comparison of Generated and Native Sequences.}
Figure~\ref{analysis}(a) shows the residue type distributions of the designed pockets and the native ones from two datasets. Generally, we observe that the distributions align very well. Some residue types, such as Tyrosine and Asparagine, have been used more frequently in designed than naturally occurring sequences. In contrast, residue types with obviously reduced usages in the designed sequences 
include Lysine, Isoleucine, and Leucine. 

\textbf{Convergence of Iterative Refinement Process.} 
During the iterations of FAIR, we monitor the ratio of unconverged residues (residue types different from those in the final designed sequences) and AAR, shown in Figure~\ref{analysis}(b) ($T_1$ = 5 and $T_2$ = 15). We observe that both curves converge quickly during refinement. We find that the AAR becomes stable in just several iterations (typically less than 3 iterations in our experiments) of backbone refinement and then gradually increases with the complete atom refinement, validating the effectiveness of our design. We show the curves of RMSD in Appendix~\ref{app:more results}, which also converges rapidly. Therefore, we set $T_1$ = 5 and $T_2$ = 10 for efficiency as default setting.

\begin{table}[t]
\small
\vskip -0.1in
\caption{Evaluation of different approaches on the pocket design task.}
\label{tab:main}
\begin{center}
\scalebox{0.90}{\begin{tabular}{ccccccccc}
\toprule
\multirow{2}{*}{Model} & \multicolumn{3}{c}{CrossDocked}             &  & \multicolumn{3}{c}{Binding MOAD} \\ \cline{2-4} \cline{6-8}
                       & AAR (↑) & RMSD (↓)  &Vina (↓)  & & AAR (↑)                & RMSD (↓)  &Vina (↓)\\ \midrule
PocketOptimizer         & 27.89$\pm$14.9\%       & 1.75$\pm$0.08     &  -6.905$\pm$2.39   &  & 28.78$\pm$11.3\%         & 1.68$\pm$0.12    &-7.829$\pm$2.41           \\
DEPACT        & 22.58$\pm$8.48\%       & 1.97$\pm$0.14    &  -6.670$\pm$2.13    &  & 26.12$\pm$8.97\%          & 1.76$\pm$0.15    & -7.526$\pm$2.05      \\
HSRN       & 31.62$\pm$10.4\%       & 2.15$\pm$0.17    &  -6.565$\pm$1.95    &  & 33.70$\pm$10.1\%          & 1.83$\pm$0.18     &    -7.349$\pm$1.93         \\
Diffusion       & 34.62$\pm$13.7\%       & 1.68$\pm$0.12    &    -6.725$\pm$1.83  &  & 36.94$\pm$12.9\%          & 1.47$\pm$0.09     &    -7.724$\pm$2.36         \\
MEAN   & 35.46$\pm$8.15\% & 1.76$\pm$0.09  & -6.891$\pm$1.86   &  &   37.16$\pm$14.7\%          & 1.52$\pm$0.09   & -7.651$\pm$1.97            \\ \midrule
FAIR                   & \textbf{40.17$\pm$12.6\%}          & \textbf{1.42$\pm$0.07}&\textbf{-7.022$\pm$1.75} &  & \textbf{43.75$\pm$15.2\%}       & \textbf{1.35$\pm$0.10} & \textbf{-7.978$\pm$1.91} \\ \bottomrule
\end{tabular}}
\end{center}
\vskip -0.1in
\end{table}

\textbf{Case Studies. }
\label{case studies}
In Figure. \ref{case}, we visualize two examples of designed pocket sequences and structures from the CrossDocked dataset (PDB: 2pc8) and the Binding MOAD dataset (PDB: 6p7r), respectively. The designed sequences recover most residues, and the generated structures are valid and realistic. Moreover, the designed pockets exhibit comparable or higher binding affinities (lower Vina scores) with the target ligand molecules than the reference pockets in the test set. 

\begin{figure}[t]
    \centering
    \includegraphics[width=0.9\linewidth]{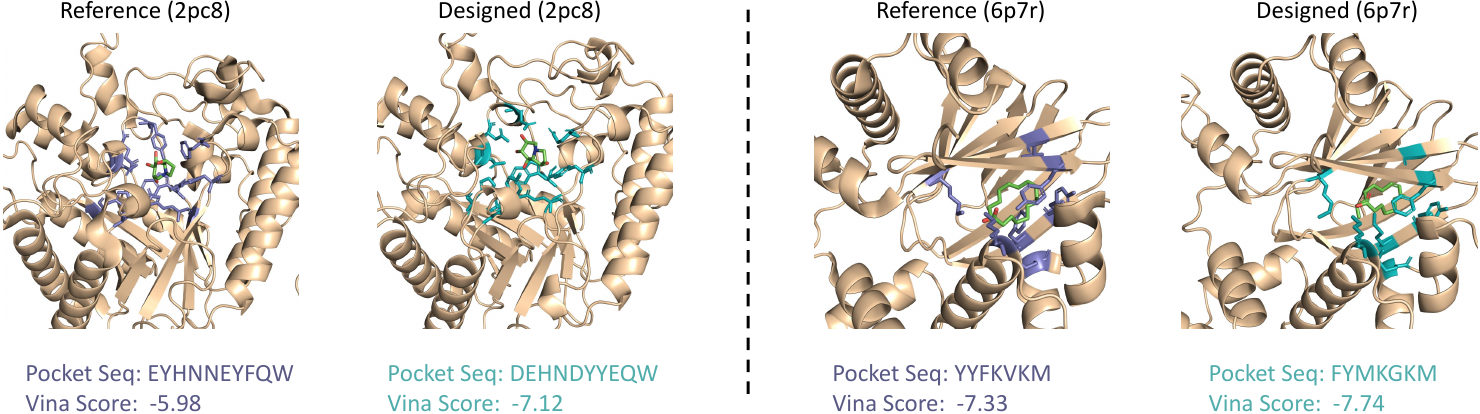}
    \caption{Case studies of Pocket design. We show the reference and designed structures/sequences of two protein pockets from the CrossDocked (PDB: 2pc8) and Binding MOAD datasets (PDB: 6p7r).}
    \vspace{-1em}
    \label{case}
\end{figure}

\subsection{Additional Analysis of FAIR's Performance}
\label{further analysis}

\textbf{Sampling Efficiency Analysis.} 
To evaluate the efficiency of FAIR, we considered the generation time of different approaches using a single V100 GPU on the same machine. Average runtimes are shown in Figure \ref{analysis}(c). First, the generation time of FAIR is more than ten times faster (FAIR: 5.6s vs. DEPACT: 107.2s) than traditional PocketOptimizer and DEPACT methods. Traditional methods are time-consuming because they require explicit energy calculations and combinatorial enumerations. Second, FAIR is more efficient than the autoregressive HSRN and diffusion methods. The MEAN sampling time is comparable to FAIR since both use full-shot decoding schemes. However, FAIR has the advantage of full-atom prediction over MEAN.

\textbf{Local Geometry of Generated Pockets.} 
Since RMSD values in Tables~\ref{tab:main} and \ref{tab:ablation} reflect only the global geometry of generated protein pockets, we also consider the RMSD of bond lengths and angles of predicted pockets to evaluate their local geometry \cite{kong2023conditional}. We measure the average RMSD of the covalent bond lengths in the pocket backbone (C-N, C=O, and C-C).
We consider three conventional dihedral angles in the
backbone structure, i.e., $\phi, \psi, \omega$ \cite{spencer2019stereochemistry} and calculate the average RMSD of their cosine values.
The results in Table~\ref{local geometry} of Appendix~\ref{app:more results} show that
FAIR achieves better performance regarding the local geometry than baseline methods. The RMSD of bond lengths and angles drop 20\% and 8\% respectively with FAIR. 

\textbf{Ablations.} 
In Table~\ref{tab:ablation}, we evaluate the effectiveness and necessity of the proposed modules in FAIR. Specifically, we removed the residue-level encoder, the masked residue type update procedure, the full-atom refinement, and the ligand molecule context, denoted as w/o the encoder/consistency/full-atom/ligand. The results in Table~\ref{tab:ablation} demonstrate that the removal of any of these modules leads to a decrease in performance. We observe that the performance of FAIR w/o hierarchical encoder drops the most. This is reasonable, as the hierarchical encoder captures essential representations for sequence/structure prediction. Moreover, the masked residue type update procedure helps ensure sequence-structure consistency. Additionally, the full-atom refinement module and the binding ligand are essential for modeling side-chain and pocket-ligand interactions, which play important roles in protein-ligand binding.

Further, we replaced the iterative full-shot refinement modules in FAIR with the autoregressive decoding method used in HSRN \cite{jin2022antibody} as a variant of FAIR. However, this variant's performance is also worse than FAIR's. As discussed earlier, autoregressive generation can accumulate errors and is more time-consuming than the full-shot refinement in FAIR. For example, AAR drops from 40.17\% to 31.04\% on CrossDocked with autoregressive generation.

\begin{figure}[t]
    \centering
    \subfigure[]{\includegraphics[width=0.38\linewidth]{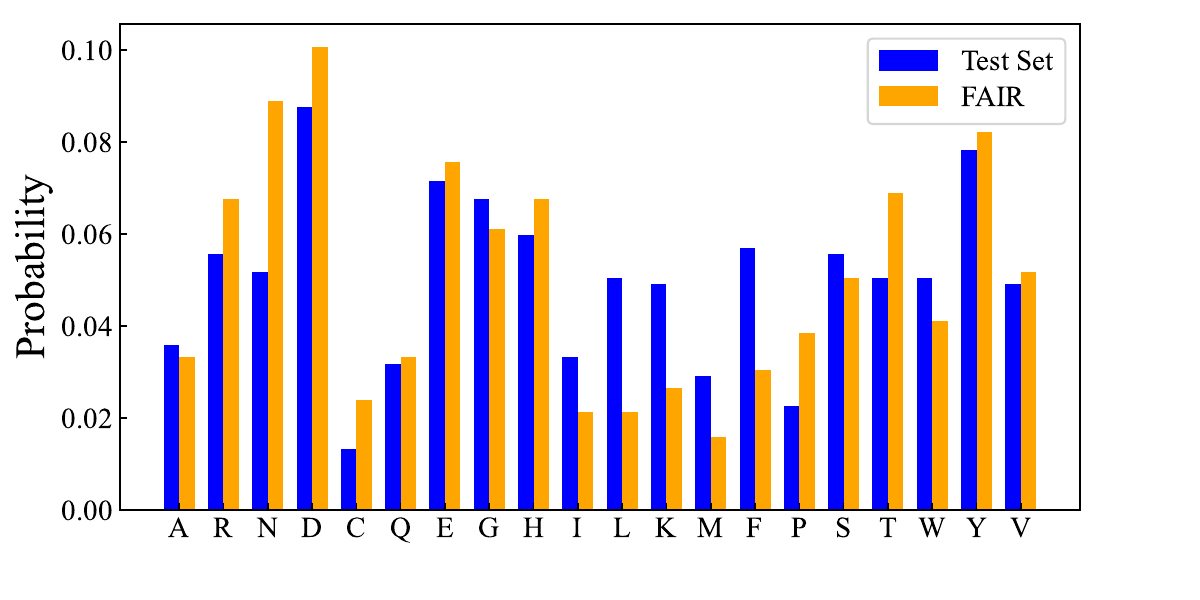}}
    \subfigure[]{\includegraphics[width=0.285\linewidth]{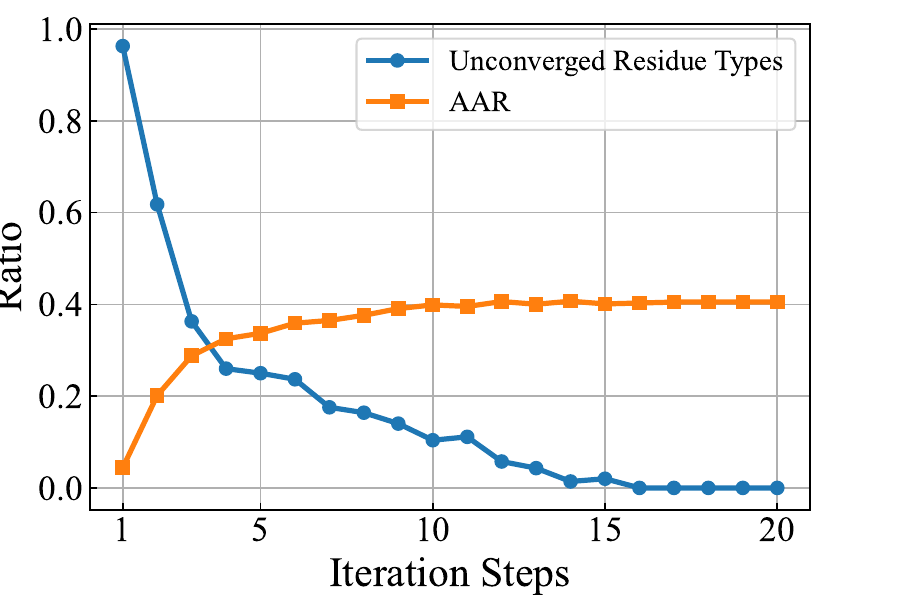}}
    \subfigure[]{\includegraphics[width=0.285\linewidth]{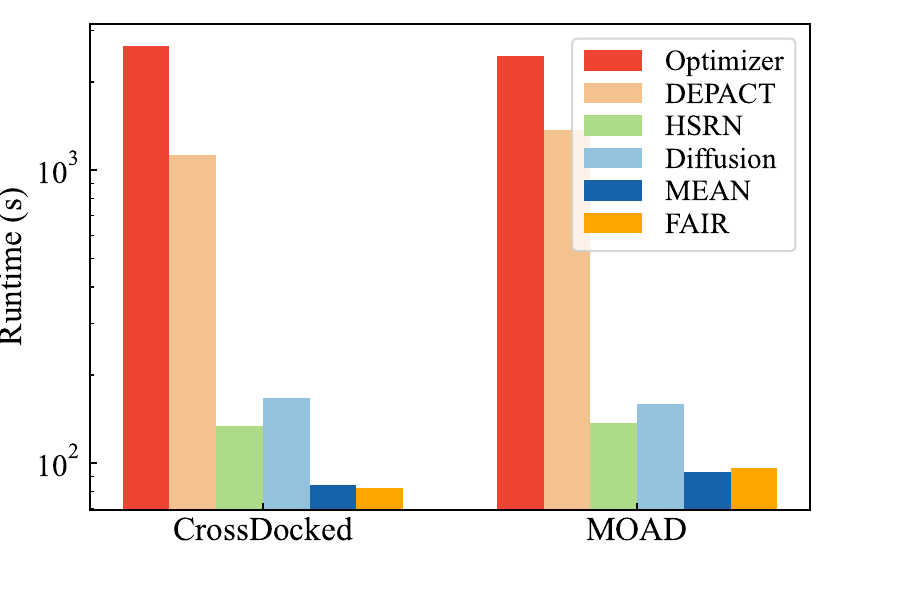}}
    \caption{(a) The distributions of residue types in the designed pockets and the test sets. (b) The ratio of unconverged residues (residue types different from those in the final designed sequences) and AAR (Amino Acid Recovery) during the iterations of FAIR on the CrossDocked dataset. (c) Comparisons of average generation time for 100 pockets with baseline methods.}
    \vspace{-1em}
    \label{analysis}
\end{figure}

\begin{table}[htbp]
\small
\vskip -0.0in
\caption{Ablation studies of FAIR.}
\label{tab:ablation}
\begin{center}
\scalebox{0.90}{\begin{tabular}{ccccccccc}
\toprule
\multirow{2}{*}{Model} & \multicolumn{3}{c}{CrossDocked}             &  & \multicolumn{3}{c}{Binding MOAD} \\ \cline{2-4} \cline{6-8}
                       & AAR (↑) & RMSD (↓)  &Vina (↓)  & & AAR (↑)                & RMSD (↓)  &Vina (↓)\\ \midrule
w/o hier encoder         & 25.30$\pm$9.86\%       & 2.36$\pm$0.24     &  -6.694$\pm$2.28   &  & 29.21$\pm$10.8\%          & 1.82$\pm$0.14    &-7.235$\pm$2.10           \\
w/o consistency   & 36.75$\pm$9.21\% & 1.49$\pm$0.08  & -6.937$\pm$1.72   &  &   37.40$\pm$15.0\%         & 1.49$\pm$0.12   & -7.752$\pm$1.94            \\ 
w/o full-atom        & 35.52$\pm$10.4\%       & 1.62$\pm$0.13    &  -6.412$\pm$2.37    &  & 35.91$\pm$13.5\%          & 1.54$\pm$0.12    & -7.544$\pm$2.23      \\
w/o ligand       & 33.16$\pm$15.3\%       & 1.71$\pm$0.11    &  -6.353$\pm$1.91    &  & 34.54$\pm$14.6\%          & 1.50$\pm$0.09     &    -7.271$\pm$1.79         \\
autoregressive       & 31.04$\pm$14.2\%       & 1.56$\pm$0.16    &    -6.874$\pm$1.90  &  & 33.65$\pm$12.7\%          & 1.46$\pm$0.11     &    -7.765$\pm$1.84         \\
\midrule
FAIR                   & \textbf{40.17$\pm$12.6\%}          & \textbf{1.42$\pm$0.07}&\textbf{-7.022$\pm$1.75} &  & \textbf{43.75$\pm$15.2\%}       & \textbf{1.35$\pm$0.10} & \textbf{-7.978$\pm$1.91} \\ \bottomrule
\end{tabular}}
\end{center}
\vskip -0.1in
\end{table}

\section{Limitations and Future Research}
While we have made significant strides in addressing the limitations of prior research, several avenues remain to enhance FAIR's capabilities. First, beyond the design of \emph{de novo} protein pockets, optimizing properties such as the binding affinity of existing pockets is crucial, especially for therapeutic applications. In forthcoming research, we aim to combine reinforcement learning and Bayesian optimization with FAIR to refine pocket design. Second, leveraging physics-based and template-matching methodologies can pave the way for the development of hybrid models that are both interpretable and generalizable. For instance, physics-based techniques can be employed to sample various ligand conformations during ligand initialization. The inherent protein-ligand interaction data within templates can guide pocket design. Lastly, integrating FAIR into protein/enzyme design workflows has the potential to be immensely beneficial for the scientific community. Conducting wet-lab experiments to assess the efficacy of protein pockets designed using FAIR would be advantageous. Insights gleaned from these experimental outcomes can be instrumental in refining our model, thereby establishing a symbiotic relationship between computation and biochemical experiments.


\section{Conclusion}
We develop a full-atom iterative refinement framework (FAIR) for protein pocket sequence and 3D structure co-design.
Generally, FAIR has two refinement steps (backbone refinement and full-atom refinement) and follows a coarse-to-fine pipeline. 
The influence of side-chain atoms, the flexibility of binding ligands, and sequence-structure consistency are well considered and addressed. 
We empirically evaluate our method on cross-docked and experimentally determined datasets to show the advantage over existing physics-based, template-matching-based, and deep generative methods in efficiently generating high-quality pockets. 
We hope our work can inspire further explorations of pocket design with deep generative models and benefit the broad community. 

\section{Acknowledgements}
This research was partially supported by grants from the National Key Research and Development Program of China (No. 2021YFF0901003) and the University Synergy Innovation Program of Anhui Province (GXXT-2021-002). We also wish to express our sincere appreciation to Dr. Yaoxi Chen for his constructive discussions, which greatly enrich this research.

\bibliographystyle{plain}
\bibliography{main}
\appendix
\setcounter{theorem}{0}
\newpage
\section{Details on FAIR Model}
\label{app:details}
\subsection{Training and Sampling Algorithms}
Here, we show the pseudo-codes of the training and sampling of FAIR in the Algorithm. \ref{alg:training} and \ref{alg:sampling}.
\begin{algorithm}[H]
    \caption{Training of FAIR}
    \label{alg:training}
    \textbf{Input:} protein sequences $\mathcal{A} = \bm a_1\cdots \bm a_{N_s}$ and structures $\{\vx(\va_i)\}_{i=1}^{N_s}$, ligand molecule $\mathcal{M}$.\\
    Initialize the coordinates of pocket residues $\{\vx(\vb_i)\}$.\\
    Add Gaussian noise to the ligand molecular structure $\{\vx_j\}$.\\
    Initialize the sequences of pocket residues with uniform distributions over 20 residue categories.\\
    $\mathcal{L}_{pred} = 0; \mathcal{L}_{struct} = 0.$\\
    \# Backbone refinement \\
    \textbf{for} $t$ in $1, \cdots\ T_1$:
    
    \ \ \ \ \ \ \ Obtain the residue embeddings $\{\vf_i\}$ and atom embeddings $\{\vh_i\}$ with the hierarchical encoder.
    
    \ \ \ \ \ \ \  Predict residue types and update coordinates with structure refinement modules.

    \ \ \ \ \ \ \  $\mathcal{L}_{seq} += \sum_i l_{ce}(p_i^{t}, \hat p_i).$

    \ \ \ \ \ \ \  $\mathcal{L}_{struct} += \sum_i l_{huber}(\vx(\vb_i)^{t}, \hat \vx(\vb_i)) + \sum_j l_{huber}(\vx_j^t, \hat \vx_j).$\\
    \textbf{end for}\\
    Sample residue types from the predicted residue type distributions.\\
    Initialize the coordinates of side-chain atoms with the coordinates of corresponding $C_\alpha$.\\
    \# Full-atom refinement\\
     \textbf{for} $t$ in $1, \cdots\ T_2$:

     \ \ \ \ \ \ \ Randomly mask a subset of pocket residues.
    
    \ \ \ \ \ \ \ Obtain the residue embeddings $\{\vf_i\}$ and atom embeddings $\{\vh_i\}$ with the hierarchical encoder.
    
    \ \ \ \ \ \ \  Predict the residue types of masked residues.

    \ \ \ \ \ \ \  $\mathcal{L}_{seq} += \sum_i l_{ce}(p_i^{t}, \hat p_i).$
    
    \ \ \ \ \ \ \  Reinitialize the masked residue if its residue type conflicts with the predicted result.
    
    \ \ \ \ \ \ \ Update coordinates with structure refinement modules. 
    
    \ \ \ \ \ \ \  $\mathcal{L}_{struct} += \sum_i l_{huber}(\vx(\vb_i)^{t}, \hat \vx(\vb_i)) + \sum_j l_{huber}(\vx_j^t, \hat \vx_j).$\\
    \textbf{end for}\\
    Minimize $\frac{1}{T} (\mathcal{L}_{seq} + \mathcal{L}_{struct})$.
    
\end{algorithm}

\begin{algorithm}[H]
    \caption{Sampling of FAIR}
    \label{alg:sampling}
    \textbf{Input:} protein sequences $\mathcal{A} = \bm a_1\cdots \bm a_{N_s}$ and structures $\{\vx(\va_i)\}_{i=1}^{N_s}$, ligand molecule $\mathcal{M}$.\\
    Initialize the coordinates of pocket residues $\{\vx(\vb_i)\}$.\\
    Initialize the sequences of pocket residues with uniform distributions over 20 residue categories.\\
    \# Backbone refinement \\
    \textbf{for} $t$ in $1, \cdots\ T_1$:
    
    \ \ \ \ \ \ \ Obtain the residue embeddings $\{\vf_i\}$ and atom embeddings $\{\vh_i\}$ with the hierarchical encoder.
    
    \ \ \ \ \ \ \  Predict residue types and update coordinates with structure refinement modules.\\ 
    \textbf{end for}\\
    Sample residue types from the predicted residue type distributions.\\
    Initialize the coordinates of side-chain atoms with the coordinates of corresponding $C_\alpha$.\\
    \# Full-atom refinement\\
     \textbf{for} $t$ in $1, \cdots\ T_2$:

     \ \ \ \ \ \ \ Randomly mask a subset of pocket residues.
    
    \ \ \ \ \ \ \ Obtain the residue embeddings $\{\vf_i\}$ and atom embeddings $\{\vh_i\}$ with the hierarchical encoder.
    
    \ \ \ \ \ \ \  Predict the residue types of masked residues.
    
    \ \ \ \ \ \ \  Reinitialize the masked residue if its residue type conflicts with the predicted result.
    
    \ \ \ \ \ \ \ Update coordinates with structure refinement modules. \\
    \textbf{end for}
    
\end{algorithm}

\subsection{Additional Information on Hierarchical Graph Transformer Encoder}
\label{model architecture}
Our hierarchical encoder includes the atom-level encoder and the residue-level encoder. The powerful 3D graph transformer is used as the model backbone. Here, we first provide additional information on the graph transformer. Residue-level node and edge feature representations follow prior research \cite{ingraham2019generative, jin2022antibody}.

\textbf{Graph transformer architecture.} The atom/residue-level encoder contains $L$ graph transformer layers. Let $\vh^{(l)}$ be the set of node representations at the $l$-th layer. In each graph transformer layer, there is a multi-head
self-attention (MHA) and a feed-forward block (FFN). The layer normalization (LN) is applied before the two blocks \cite{xiong2020layer, zhang2022hierarchical}. The details of MHA have been shown in Sec~\ref{encoder}, and the graph transformer layer is formally characterized as:
\begin{align}
        \vh'^{(l-1)} &= {\rm MHA(LN(}\vh^{(l-1)})) + \vh^{(l-1)} \\
        \vh^{(l)} &= {\rm FFN(LN(}\vh'^{(l-1)})) + \vh'^{(l-1)}, \;\; (0 \leq l < L).
\end{align}

\textbf{Residue-level node features.} 
The residue-level encoder only keeps the C$_\alpha$ atoms to represent residues and constructs a $K_r$ nearest neighbor graph at the residue level. We use the original protein pocket backbone atoms for reference to construct the $K_r$ nearest neighbor graph and help calculate geometric features.
Each residue node is represented by six features: polarity $f_p\in\{0,1\}$, hydropathy $f_h\in [-4.5, 4.5]$, volume $f_v\in [60.1, 227.8]$, charge $f_c\in \{-1,0,1\}$, and whether it is a hydrogen bond donor $f_d\in\{0,1\}$ or acceptor $f_a\in\{0,1\}$. We expand hydropathy and volume features into radial basis with interval sizes 0.1 and 10, respectively. Overall, the dimension of the residue-level feature vector $\vf_i$ is 112.

\textbf{Residue-level edge features.}
For the $i$-th residue, we let $\vx(\va_{i,1})$ denote the coordinate of its  C$_\alpha$ and define its local coordinate frame $\mO_i = [\vc_i, \vn_i, \vc_i \times \vn_i]$ as:
\begin{equation}
    \vu_i = \frac{\vx(\va_{i,1}) - \vx(\va_{i-1,1})}{\lVert \vx(\va_{i,1}) - \vx(\va_{i-1,1}) \rVert}, \quad
    \vc_i = \frac{\vu_i - \vu_{i+1}}{\lVert \vu_i - \vu_{i+1} \rVert}, \quad
    \vn_i = \frac{\vu_i \times \vu_{i+1}}{\lVert \vu_i \times \vu_{i+1} \rVert}.
\end{equation}
Based on the local frame, the edge features between residues $i$ and $j$ can be computed as:
\begin{equation}
    \bm{e}_{ij}^{res} = {\rm Concat}\bigg(
    E_{\mathrm{pos}}(i - j), ~
    \bm{e}_{\rm {RBF}}(\Vert \vx(\va_{i,1}) - \vx(\va_{j,1}) \Vert),~
    \mO_i^\top \frac{\vx(\va_{j,1}) - \vx(\va_{i,1})}{\Vert \vx(\va_{i,1}) - \vx(\va_{j,1}) \Vert},~
    \vq(\mO_i^\top \mO_j)
    \bigg). \label{eq:edge-feature}
\end{equation}
The edge feature $\bm{e}_{ij}^{res}$ contains four parts. The positional encoding $E_{\mathrm{pos}}(i - j)$ encodes the relative sequence distance between two residues. 
The second term $\bm{e}_{\rm {RBF}}(\cdot)$ is a distance encoding with radial basis functions. 
The third term is a direction encoding corresponding to the relative direction of $\vx(\va_{j,1})$ in the local frame of the $i$-th residue. 
The last term $\vq(\mO_i^\top \mO_j)$ is the orientation encoding of the quaternion representation $\vq(\cdot)$ of the spatial rotation matrix $\mO_i^\top \mO_j$ \cite{huynh2009metrics}. 
Overall, the dimension of the residue-level edge feature $\bm{e}_{ij}^{res}$ is 39.

\subsection{More Details of the Full-atom Structure Refinement}
\label{full atom structure refinement}
The full-atom structure refinement is similar to the backbone structure refinement in Sec.~\ref{backbone refinement} with minor modifications to accommodate side-chain atoms. We consider the internal interactions within the protein and the external interactions between ligand molecules and protein residues. 
Firstly, the internal interactions within the protein are calculated based on residue embeddings:
\begin{equation}
    {\bm g}_{ij,kj} = g(\vf_{i}, \vf_k, \bm{e}_{\rm {RBF}}(\|\vx(\vb_{i,1}) - \vx(\va_{k,1})\|))_j \cdot \frac{\vx(\vb_{i,j})- \bar\vx(\va_{k})}{\|\vx(\vb_{i,j})- \bar\vx(\va_{k})\|}, 
    \label{equ.15}
\end{equation}
where $\bm{e}_{\rm {RBF}}(\|\vx(\vb_{i,1}) - \vx(\va_{k,1})\|)$ is the radial basis distance encoding of pairwise $C_\alpha$ distances, the function $g$ is a feed-forward neural network with 14 channels for all side-chain atoms (the maximum number of atoms in a residue is 14).
Due to the different numbers of side-chain atoms in different kinds of residues, the normalized direction vector here is
$\frac{\vx(\vb_{i,j})- \bar\vx(\va_{k})}{\|\vx(\vb_{i,j})- \bar\vx(\va_{k})\|} \in \mathbb{R}^3$ where $\bar\vx(\va_{k})$ is the average coordinates of residue $\va_{k}$.
The external interactions between pocket $\mathcal{B}$ and ligand $\mathcal{M}$ is the same as the backbone refinement:
\begin{equation}
    {\bm g}_{ij,s} = g'(\vh_{\vb_{i,j}}, \vh_s, \bm{e}_{\rm {RBF}}(\|\vx(\vb_{i,j})- \vx_s\|))\cdot \frac{\vx(\vb_{i,j})- \vx_s}{\|\vx(\vb_{i,j})- \vx_s\|},
    \label{equ.16}
\end{equation}
which is calculated based on the atom representations. For the conciseness of expression, here we use $\vh_{\vb_{i,j}}$ and $\vh_s$ to denote the pocket and ligand atom embeddings, respectively.
Finally, the coordinates of pocket atoms $\vx(\vb_{i,j})$ and ligand molecule atoms $\vx_s$ are updated as follows:
\begin{align}
\label{pocket update1}
    &\vx(\vb_{i,j}) \leftarrow \vx(\vb_{i,j}) + \frac{1}{K_r} \sum_{k \in \mathcal{N}(i)}{\bm g}_{ij,kj} + \frac{1}{N_l} \sum_{s}{\bm g}_{ij,s}, \\
    &\vx_s \leftarrow \vx_s - \frac{1}{|\mathcal{B}|} \sum_{i,j}{\bm g}_{ij,s},~ 1\le i \le m, 1 \le j \le n_i, 1 \le s \le N_l,
    \label{ligand update1}
\end{align}
where $\mathcal{N}(i)$ indicates the neighbors in the constructed residue-level $K_r$-NN graph; ${|\mathcal{B}|}$ denotes the number of pocket atoms.

\subsection{Structure initialization based on interpolation and extrapolation}
\begin{figure*}[h]
	\centering
	\includegraphics[width=0.95\linewidth]{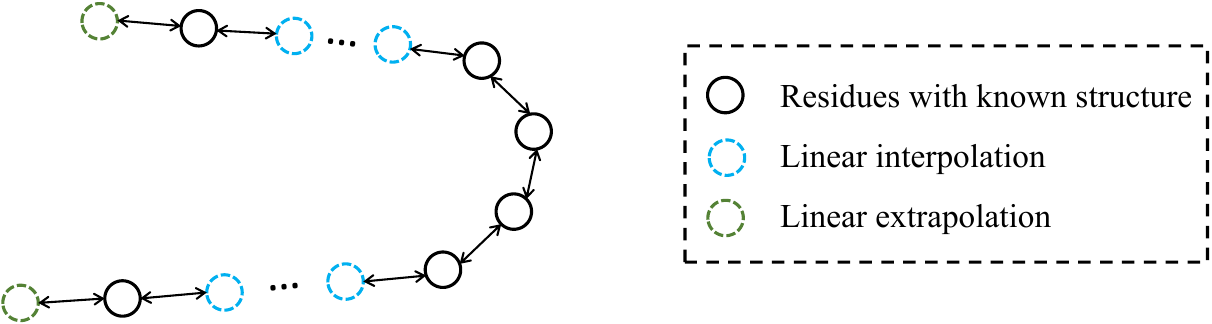}
	\caption{Structure initialization based on interpolation and extrapolation}
	\label{fig:interpolation}
\end{figure*}
\label{interpolation}
Figure~\ref{fig:interpolation} illustrates our structure initialization strategy.
We initialize the residue coordinates with linear interpolations and extrapolations based on the nearest residues with known structures in the protein.
Denote the sequence of residues as $\mathcal{A} = \bm a_1\cdots \bm a_{N_s}$, where $N_s$ is the length of the sequence. Let $\bm Z(\va_i) \in \mathbb{R}^{3\times 4}$ denote the backbone coordinates of the $i$-th residue. If we want to initialize the coordinates of the $i$-th residue, we take the following initialization strategy:
 (1) We use linear interpolation if there are residues with known coordinates at both sides of the $i$-th residue. Specifically, assume
$p$ and $q$ are the indexes of the nearest residues with known coordinates at each side of the $i$-th residue ($p<i<q$), we have: $\bm Z(\va_i) = \frac{1}{q-p} [(i-p)\bm Z(\va_q) + (q-i) \bm Z(\va_p)]$. (2) We conduct linear extrapolation if the $i$-th residue is at the ends of the chain, i.e., no residues with known structures at one side of the $i$-th residue. Specifically, let $p$ and $q$ denote the index of the nearest and the second nearest residue with known coordinates. The position of the $i-$th residue can be initialized as $\bm Z(\va_i) =  \bm Z(\va_p) + \frac{i-p}{p-q}(\bm Z(\va_p) - \bm Z(\va_q))$.
We initialize the side-chain atoms' coordinates with the coordinate of their corresponding $C_\alpha$.

\subsection{Equivariance. }
FAIR has the desirable property of E(3)-equivariance as follow:
\begin{theorem}
Denote the E(3)-transformation as $T_g$ and the generative process of FAIR as $\{(\vb_i, \vx(\vb_i))\}_{i=1}^m = p_\theta(\mathcal{A}\setminus \mathcal{B}, \mathcal{M})$, where $\{(\vb_i, \vx(\vb_i))\}_{i=1}^m$ indicates the designed pocket seqeuce and structure. We have $\{(\vb_i, T_g \cdot \vx(\vb_i))\}_{i=1}^m = p_\theta(T_g \cdot (\mathcal{A}\setminus \mathcal{B}), T_g \cdot \mathcal{M})$.
\label{theorem}
\end{theorem}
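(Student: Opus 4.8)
I would prove the statement by induction on the refinement rounds, maintaining two coupled invariants through the whole pipeline: every learned embedding produced by the hierarchical encoder stays E(3)-\emph{invariant}, while every atomic coordinate stays E(3)-\emph{equivariant}. Write the transformation as $T_g \cdot \vx = R\vx + t$ with $R \in \mathrm{O}(3)$ and $t \in \mathbb{R}^3$. The generative process decomposes into four operations repeated across rounds: (i) encoding the current complex, (ii) predicting and sampling residue types, (iii) computing the interaction vectors $\bm g$, and (iv) updating coordinates. The claim follows once I show that, given invariant embeddings and equivariant coordinates at the start of a round, the same holds at its end, together with equivariant initialization serving as the base case.

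\textbf{Base case.} First I would verify that the coordinate initialization is equivariant: the interpolation $\bm Z(\va_i)=\frac{1}{q-p}[(i-p)\bm Z(\va_q)+(q-i)\bm Z(\va_p)]$ and the analogous extrapolation are affine combinations whose coefficients sum to $1$, and such combinations commute with $T_g$ because the translation $t$ passes through a unit coefficient sum; initializing side chains at their $C_\alpha$ is trivially equivariant. Next I would confirm encoder invariance: node features are chemical and geometric attributes independent of the global pose, and edge features are built only from pairwise distances (through $\bm{e}_{\rm RBF}$) and from relative directions and orientations expressed in local frames. Since differences of coordinates cancel $t$ and $R$ preserves norms and inner products, these features, and hence all $\{\vf_i\}$ and $\{\vh_i\}$, are unchanged by $T_g$.

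\textbf{Inductive step.} Assuming invariant embeddings and equivariant coordinates, I would show the update is equivariant. Each interaction term factors as an invariant scalar (the output of $g$ or $g'$ applied to invariant embeddings and invariant RBF distances) times a normalized difference vector such as $\frac{\vx(\vb_{i,j})-\vx(\va_{k,j})}{\|\vx(\vb_{i,j})-\vx(\va_{k,j})\|}$. Under $T_g$ the difference becomes $R(\vx(\vb_{i,j})-\vx(\va_{k,j}))$ because $t$ cancels, and orthogonality of $R$ preserves the normalization, so the direction maps to its $R$-image while the scalar is unchanged; thus $\bm g \mapsto R\bm g$. Substituting into $\vx \leftarrow \vx + \frac{1}{K_r}\sum_k \bm g_{ij,kj} + \frac{1}{N_l}\sum_s \bm g_{ij,s}$ sends $R\vx + t$ to $R\vx + t + R(\cdots) = T_g\cdot(\vx + \cdots)$, so the new coordinate is again equivariant, and the ligand update behaves identically. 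Because the residue-type distribution $p_i=\mathrm{Softmax}(\mathrm{MLP}(\vf_i))$ is a function of invariant embeddings, it is pose-independent, so both the soft feature averaging and the hard sampled types have a law that does not depend on $T_g$. Re-encoding the updated (equivariant) coordinates with the (invariant) types reproduces the inductive hypothesis, and after $T_1+T_2$ rounds I obtain invariant residue types together with coordinates carried by $T_g$, which is exactly $\{(\vb_i, T_g\cdot\vx(\vb_i))\}_{i=1}^m$.

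\textbf{Main obstacle.} The scalar-times-direction algebra is routine; the delicate points are two. First, reflections: the local frames $\mO_i=[\vc_i,\vn_i,\vc_i\times\vn_i]$ are built from cross products, which acquire a factor $\det(R)$ under an improper $R$, so I must check that the particular direction and orientation encodings entering $\bm{e}_{ij}^{res}$ are genuinely invariant under all of $\mathrm{O}(3)$ and not merely $\mathrm{SO}(3)$; if a chiral term survives, the statement would have to be weakened to SE(3)-equivariance. Second, the refinement is stochastic, since residue types are sampled and a random subset is masked each round, so equivariance must be read as an equality of distributions; I would argue that the sampling and masking randomness is drawn independently of the global pose, whence the joint law of (types, coordinates) is exactly transported by $T_g$. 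Establishing these two points carefully is where the real effort lies.
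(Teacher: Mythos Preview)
Your proposal is correct and follows essentially the same decomposition as the paper: show the hierarchical encoder is E(3)-invariant (inputs are scalar attributes and pairwise distances), then show each coordinate update is E(3)-equivariant because it adds to $\vx$ a sum of invariant-scalar-times-difference-vector terms, whence the composite generative process is equivariant. Your inductive framing, explicit verification of equivariant initialization, and the two caveats you flag (reflections interacting with the cross-product local frames, and reading equivariance distributionally because of sampling/masking) are all points the paper's proof glosses over, so your argument is strictly more careful than the original while using the same key ideas.
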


The E(3)-transformation on the euclidean coordinate $\vx \in \mathbb{R}^3$ can be represented as: $T_g \cdot \vx = O\vx + t$, where $O\in \mathbb{R}^3$ is the orthogonal transformation matrix, $t \in \mathbb{R}^3$ is the translation vector.
Before we give the proof, we present the following lemmas.

\begin{lemma}
The hierarchical graph transformer encoder is E(3)-invariant.
\end{lemma}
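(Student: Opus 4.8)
The plan is to establish E(3)-invariance of the hierarchical encoder by tracing the two sources of information that feed into the node embeddings: the intrinsic atom/residue attributes and the geometric edge features. First I would observe that all the initial node embeddings $\bm{h}_i^{(0)}$ and the residue features $\vf_{\va_i}^{res}$ depend only on chemical/biological attributes (atom type, polarity, charge, hydropathy, etc.) and are therefore manifestly invariant under any $T_g \cdot \vx = O\vx + t$, since $T_g$ acts only on coordinates. The substantive content is to verify that every geometric quantity entering the edge features is built from E(3)-invariant primitives.

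The key step is a case-by-case check of the edge features. For the atom-level encoder, the edge embeddings $\bm{e}_{ij}$ are functions of the pairwise Euclidean distances $\|\vx_i - \vx_j\|$ passed through Gaussian kernels; since $\|O\vx_i + t - (O\vx_j + t)\| = \|O(\vx_i - \vx_j)\| = \|\vx_i - \vx_j\|$ for orthogonal $O$, these are invariant. For the residue-level encoder I would examine each of the four terms in Eq.~\eqref{eq:edge-feature}: the positional encoding $E_{\mathrm{pos}}(i-j)$ depends only on sequence indices and is trivially invariant; the RBF distance term is invariant by the same norm-preservation argument; the direction term $\mO_i^\top \frac{\vx(\va_{j,1}) - \vx(\va_{i,1})}{\|\cdot\|}$ requires showing the local frame transforms as $\mO_i \mapsto O\mO_i$ (because each of $\vu_i, \vc_i, \vn_i$ is built from differences of coordinates, hence rotates and is translation-independent), so that $\mO_i^\top$ cancels the $O$ acting on the displacement vector; and the orientation term $\vq(\mO_i^\top \mO_j)$ is invariant because $\mO_i^\top \mO_j \mapsto (O\mO_i)^\top (O\mO_j) = \mO_i^\top O^\top O \mO_j = \mO_i^\top \mO_j$. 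Once every edge and node input is invariant, the graph transformer layers (MHA and FFN, which operate only on these invariant embeddings) preserve invariance, so the final representations $\{\vf_i\}, \{\vh_i\}$ are E(3)-invariant.

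I expect the main obstacle to be the direction term of the residue edge feature, specifically verifying the frame transformation $\mO_i \mapsto O \mO_i$ and confirming that the cross product $\vu_i \times \vu_{i+1}$ behaves correctly. The subtlety is that cross products are equivariant under proper rotations but pick up a sign under reflections (orientation-reversing elements of E(3)), so if $O$ has determinant $-1$ the normal vector $\vn_i$ maps to $-O\vn_i$ rather than $O\vn_i$. I would need to argue either that this sign cancels in the final invariant combinations $\mO_i^\top(\cdots)$ and $\mO_i^\top\mO_j$, or restrict the claimed symmetry group accordingly; since the paper states the full E(3) group including reflections, the cleanest route is to note that in the direction term the frame appears as $\mO_i^\top$ multiplying a genuine displacement vector (which transforms by $O$ with no sign), and any reflection sign introduced into the third column of $\mO_i$ is matched by the corresponding transformation of the vector, leaving the inner product invariant. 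Handling this reflection bookkeeping carefully is the one place where a naive "everything just rotates" argument could fail, so that is where I would concentrate the rigor.
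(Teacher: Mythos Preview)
Your approach is far more careful than the paper's own proof, which is a one-sentence assertion: the inputs are ``scalar node features and pairwise distances,'' so the encoder is invariant ``by construction.'' In particular, the paper does not separately examine the direction term $\mO_i^\top \frac{\vx(\va_{j,1}) - \vx(\va_{i,1})}{\|\cdot\|}$ or the orientation term $\vq(\mO_i^\top \mO_j)$ at all; it simply folds them under the ``pairwise distances'' rubric. So your case-by-case decomposition is a genuinely more rigorous route, and you are right that the local-frame features are where the argument must do real work.

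However, your proposed resolution of the reflection issue does not go through. Under $\vx \mapsto O\vx + t$ with $\det O = -1$, one has $\vu_i \mapsto O\vu_i$ and hence $\vc_i \mapsto O\vc_i$, but the cross-product identity $(Oa)\times(Ob) = \det(O)\,O(a\times b)$ gives $\vn_i \mapsto -O\vn_i$, while the third column $\vc_i\times\vn_i \mapsto (O\vc_i)\times(-O\vn_i) = O(\vc_i\times\vn_i)$ is restored. Thus $\mO_i \mapsto O\mO_i\,\mathrm{diag}(1,-1,1)$, and the direction term becomes $\mathrm{diag}(1,-1,1)\,\mO_i^\top\hat d_{ij}$: the second component flips sign and nothing cancels it. The orientation term likewise picks up $\mathrm{diag}(1,-1,1)\,\mO_i^\top\mO_j\,\mathrm{diag}(1,-1,1)$. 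So the sign is in the \emph{second} column, not the third as you guessed, and it survives. The honest conclusion is that these residue-level edge features are SE(3)-invariant but not E(3)-invariant; the lemma as stated is slightly too strong, and the paper's proof avoids the issue only by not looking. Your instinct to flag this as the weak point was exactly right.
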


\begin{proof}
As shown in Sec.~\ref{encoder}, the input to the hierarchical graph transformer encoders are scalar node features and pairwise distances, and the output is atom and residue scalar features. Therefore, the hierarchical encoder is E(3)-invariant by construction.
\end{proof}

\begin{lemma}
The backbone and full-atom structure refinement are E(3)-equivariant. 
\end{lemma}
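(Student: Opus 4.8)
The plan is to verify E(3)-equivariance of the two structure-refinement operations directly from their defining equations, namely \eqref{equ.5}--\eqref{ligand update} for the backbone case and \eqref{equ.15}--\eqref{ligand update1} for the full-atom case. An E(3)-transformation acts on each coordinate as $T_g \cdot \vx = O\vx + t$ with $O$ orthogonal and $t$ a translation. I would show that if every input coordinate is replaced by $O\vx + t$, then every updated output coordinate transforms in exactly the same way. Since the two refinement schemes have identical structure, it suffices to treat the backbone case carefully and note that the full-atom case follows by the same argument.

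The key observation driving the proof is that the interaction vectors $\bm g_{ij,kj}$ and $\bm g_{ij,s}$ are each a scalar coefficient times a normalized coordinate-difference vector. First I would argue that the scalar coefficients are E(3)-invariant: they are outputs of the networks $g$ and $g'$ whose arguments are the residue/atom embeddings $\vf_i,\vf_k$ (or $\vh_{\vb_{i,j}},\vh_s$) together with the RBF encoding of a pairwise distance. The embeddings are E(3)-invariant by Lemma~1 (the encoder), and pairwise distances $\|\vx(\vb_{i,1})-\vx(\va_{k,1})\|$ are preserved because $\|O(\vx-\vy)\| = \|\vx-\vy\|$ for orthogonal $O$ (the translation $t$ cancels in the difference). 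Hence the scalar channels are unchanged under $T_g$. Second, I would handle the direction vectors: a difference such as $\vx(\vb_{i,j})-\vx(\va_{k,j})$ maps to $O\vx(\vb_{i,j})-O\vx(\va_{k,j}) = O(\vx(\vb_{i,j})-\vx(\va_{k,j}))$, so the translation drops out, and after normalization the unit vector simply picks up the factor $O$. Multiplying an invariant scalar by $O$ times the original unit vector shows each $\bm g$ transforms as $\bm g \mapsto O\bm g$, i.e. the interaction vectors are E(3)-equivariant (they transform like displacements, with no translational part).

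Finally I would feed this back into the coordinate-update rules \eqref{pocket update}--\eqref{ligand update}. The update adds a sum of equivariant vectors $\frac{1}{K_r}\sum_k \bm g_{ij,kj} + \frac{1}{N_l}\sum_s \bm g_{ij,s}$ to the current coordinate. Under $T_g$ the current coordinate is $O\vx(\vb_{i,j})+t$ and the added correction becomes $O\bigl(\frac{1}{K_r}\sum_k \bm g_{ij,kj} + \frac{1}{N_l}\sum_s \bm g_{ij,s}\bigr)$, so the new coordinate is $O\bigl(\vx(\vb_{i,j}) + \text{correction}\bigr) + t = T_g \cdot (\text{updated }\vx(\vb_{i,j}))$, which is exactly equivariance; the ligand update \eqref{ligand update} is identical with a sign change, which commutes with the linear action of $O$. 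I expect the one point requiring care (rather than genuine difficulty) to be the treatment of the translation component: one must confirm that every place a raw coordinate enters the update it does so only through differences or through the additive base point, so that the additive constant $t$ is consistently preserved on outputs that represent positions and consistently cancelled on outputs that represent displacement vectors. I would state this cleanly for the backbone refinement and then remark that replacing $\vx(\va_{k,j})$ by the residue-averaged coordinate $\bar\vx(\va_k)$ in \eqref{equ.15} changes nothing, since an average of points transforms the same way as a point under an affine map, so the full-atom refinement is equivariant by the identical reasoning.
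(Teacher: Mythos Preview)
Your proposal is correct and follows essentially the same route as the paper: invoke Lemma~1 for the E(3)-invariance of the embeddings and scalar outputs of $g,g'$, observe that coordinate differences transform as $O(\cdot)$ with the translation cancelling, conclude each $\bm g$ is an equivariant displacement, and then verify the additive update rules \eqref{pocket update}--\eqref{ligand update} (and their full-atom analogues) send $\vx \mapsto O\vx + t$. Your treatment is slightly more explicit than the paper's in separating the invariant scalar from the direction vector and in noting that the residue-averaged point $\bar\vx(\va_k)$ transforms affinely, but the logical structure is the same.
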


\begin{proof}
Firstly, with the E(3)-invariant encoder, the obtained residue embeddings $\{\vf_i\}$, atom embeddings $\{\vh_i\}$, and the output of functions $g$ and $g'$ are E(3)-invariant. Furthermore,
$\forall T_g \in$ E(3), it is easy to have $T_g \cdot (\vx(\vb_{i,j})-\vx(\va_{k,j})) = O\vx(\vb_{i,j}) + t - O\vx(\va_{k,j}) -t = O (\vx(\vb_{i,j})- \vx(\va_{k,j}))$. Similarly, we have  $T_g \cdot (\vx(\vb_{i,j}) - \vx_s) = O(\vx(\vb_{i,j}) - \vx_s)$ and $T_g \cdot (\vx(\vb_{i,j})-\bar\vx(\va_{k})) = O (\vx(\vb_{i,j})- \bar\vx(\va_{k}))$. Let $\vx(\vb_{i,j})'$ denote the updated pocket coordinates in Equation. \ref{pocket update} and \ref{pocket update1}. It is then possible to derive that the updated pocket coordinates are E(3)-equivariant as follows:
\begin{align}
    &~~~~T_g \cdot \vx(\vb_{i,j}) + T_g \cdot \frac{1}{K_r} \sum_{k \in \mathcal{N}(i)}{\bm g}_{ij,kj} + T_g \cdot \frac{1}{N_l} \sum_{s}{\bm g}_{ij,s}\\
    &= O\vx(\vb_{i,j}) + t + O\frac{1}{K_r} \sum_{k \in \mathcal{N}(i)}{\bm g}_{ij,kj} + O\frac{1}{N_l} \sum_{s}{\bm g}_{ij,s}\\
    &=O[\vx(\vb_{i,j}) + \frac{1}{K_r} \sum_{k \in \mathcal{N}(i)}{\bm g}_{ij,kj} +\frac{1}{N_l} \sum_{s}{\bm g}_{ij,s}] + t\\
    &=T_g \cdot \vx(\vb_{i,j})'.
\end{align}
Similarly, for the updated ligand coordinates, we have:
\begin{align}
    &~~~~T_g \cdot\vx_s - T_g \cdot\frac{1}{|\mathcal{B}|} \sum_{i,j}{\bm g}_{ij,s}\\
    &= O\vx_s + t - O\frac{1}{|\mathcal{B}|} \sum_{i,j}{\bm g}_{ij,s}\\
    &= O[\vx_s  - \frac{1}{|\mathcal{B}|} \sum_{i,j}{\bm g}_{ij,s}]+ t\\
    &=T_g \cdot \vx_s'.
\end{align}
Here, we use $\vx_s'$ to denote the updated ligand coordinates in Equation. \ref{ligand update} and \ref{ligand update1}. Therefore, the structure refinement modules in backbone and full-atom refinement are E(3)-equivariant.
\end{proof}
\begin{proof}
The predicted residue types are obtained by applying softmax and MLPs on the hidden representations from the E(3)-invariant encoder.
With the above lemmas, Theorem 1 can be proved since the E(3)-invariant encoder and the E(3)-equivariant structure refinement constitute the E(3)-equivariant generative process of FAIR. Formally, we have:
\begin{equation}
    \{(\vb_i, T_g \cdot \vx(\vb_i))\}_{i=1}^m = p_\theta(T_g \cdot (\mathcal{A}\setminus \mathcal{B}), T_g \cdot \mathcal{M}),
\end{equation}
which concludes Theorem \ref{theorem}.
\end{proof}

\subsection{Huber Loss}
    \label{app:huber}
In Equation~\ref{huber loss}, we use Huber loss~\cite{huber1992robust} as the structure refinement loss for the stability of optimization, which is defined as follows:
\begin{align}
    l_{huber}(x, y) = \left\{ \begin{array}{lr}
    0.5 (x-y)^2, if |x-y| < \epsilon, \\
    \epsilon \cdot (|x-y| - 0.5 \cdot \epsilon), else, 
\end{array} \right.
\end{align}
where $x$ and $y$ represent the predicted and ground-truth coordinates. The Huber loss has the following property:
if the L1 norm of $|x-y|$ is smaller than $\epsilon$, it is MSE loss, otherwise it is L1 loss. At the beginning of the model training, the deviation between the predicted and ground-truth coordinates is large, and the L1 term makes the loss less sensitive to outliers than MSE loss. The deviation is small when the training is almost complete, and the MSE loss is near 0. In practice, we find that directly using MSE loss sometimes leads to NaN at the beginning of the training, while Huber loss makes the training procedure more stable. Following the suggestion of previous works \cite{jin2022antibody, kong2023conditional}, we set $\epsilon = 1$ in all our experiments.

\section{Implementation Details}
\label{app:baseline details}
\textbf{PocketOptimizer} \cite{noske2023pocketoptimizer} \footnote{https://github.com/Hoecker-Lab/pocketoptimizer} is a physic-based computational protein design method that predicts mutations in the binding pockets of proteins to increase affinity for a specific ligand. Its previous version is \cite{malisi2012binding}. 
Generally, there are four main steps in PocketOptimizer: structure preparation, flexibility sampling, energy calculations,  and computation of design solutions.
Specifically for the energy calculations, both packing-related energies and binding-related energies are considered.
As for the output design solutions, we select the top 100 designs identified by PocketOptimizer based on integer linear programming for downstream metric calculations. 
Following the suggestions in the original paper, we use AMBER ff14S force field \cite{maier2015ff14sb} for energy computation and the Dunbrack rotamer library \cite{shapovalov2011smoothed} for rotamer sampling for PocketOptimizer in our implementation.

\textbf{DEPACT} \cite{chen2022depact} \footnote{https://github.com/chenyaoxi/DEPACT\_PACMatch} is a template-matching method that follows the two-step strategy of RosettaMatch \cite{zanghellini2006new} for pocket design and has better performance. It first searches the protein-ligand complexes in the database with similar ligand fragments and constructs a cluster model (a set of pocket residues). It then grafts the cluster model into the protein pocket with PACMatch, which places residues in pocket clusters on 
protein scaffolds by matching key atoms in the cluster model with key atoms of the protein scaffold. The qualities of the generated pockets are evaluated and ranked based on a statistical scoring function. We take the top 100 designed pockets for evaluation. The template databases are constructed in our experiments based on the corresponding training datasets for fair comparisons.

\textbf{HSRN} \cite{jin2022antibody} \footnote{https://github.com/wengong-jin/abdockgen} was originally developed for antibody design. It autoregressively predicts the residue sequence and docks the generated structure. 
To adapt HSRN to our pocket design task, we replace the antigen with the target ligand molecule to provide the context information. 
We finetune HSRN's hyperparameters based on its performance on the validation set. The hidden size in the message-passing network is 256, the number of layers is 4, and the number of RBF kernels is 16. 

\textbf{Diffusion} \cite{luo2022antigen} \footnote{https://github.com/luost26/diffab} jointly models
both the sequence and structure of an antibody using diffusion probabilistic estimates and an equivariant neural network. The diffusion model
first aggregates information from the antigen and the antibody during generation. It then iteratively updates each amino acid's type, position, and orientation. Since the side-chain atoms are not modeled, they are reconstructed at the last step with side-chain packing. To adapt Diffusion to our pocket design task, we replace the antigen with the target ligand molecule to provide the context information. We adjust the key hyperparameters based on the results of the validation sets. The number of layers, hidden dimension, and diffusion steps are 6, 128, and 100, respectively. 

\textbf{MEAN} \cite{kong2023conditional} \footnote{https://github.com/THUNLP-MT/MEAN} tackles antibody design as a conditional graph translation problem with the modified GMN \cite{huang2022equivariant} as encoders.
Specifically, MEAN co-designs antibody sequence and structure via a multi-round progressive full-shot scheme, which is more efficient than auto-regressive or diffusion-based approaches. However, it only models protein backbone atoms and neglects side chains, which are important for protein-protein and protein-ligand interactions. To adapt MEAN to our pocket design task, we replace the antigen with the target ligand molecule to provide the context information. The hyperparameters are finetuned based on the validation set for better performance. We set the hidden size as 128, the number of layers as 3, and the number of iterations for decoding as 3.

\textbf{FAIR} is our full atom pocket sequence-structure co-design method. The number of layers for the atom and residue-level encoder are 6 and 2, respectively. $K_a$ and $K_r$ are set as 24 and 8 respectively. The number of attention heads is set as 4;
The hidden dimension $d$ is set as 128. The standard deviation of the Gaussian noise added to the ligand coordinates in model training is 0.1.

\section{More Experimental Results}
\subsection{Convergence of Iterative Refinement}
We show the RMSD curves in Table~\ref{rmsd}bc. We observe the RMSD converges rapidly as the protein pocket is refined. The curve has fluctuations near step 5, potentially caused by the initialization of the side-chain atom at the beginning of full-atom refinement. The coordinates of side-chain atoms are randomly initialized near the $C_\alpha$ and may distort the backbone atoms with the structural refinement. However, the RMSD of FAIR becomes steady quickly after step 10, which again shows the effectiveness of FAIR.

\begin{figure*}[t]
	\centering
	\subfigure[]{\includegraphics[width=0.3\linewidth]{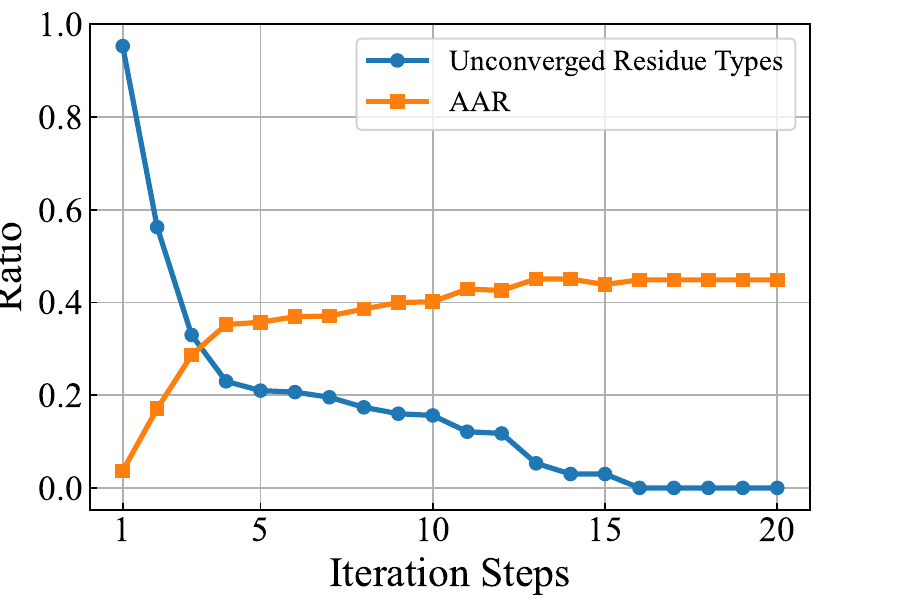}}
    \subfigure[]{\includegraphics[width=0.3\linewidth]{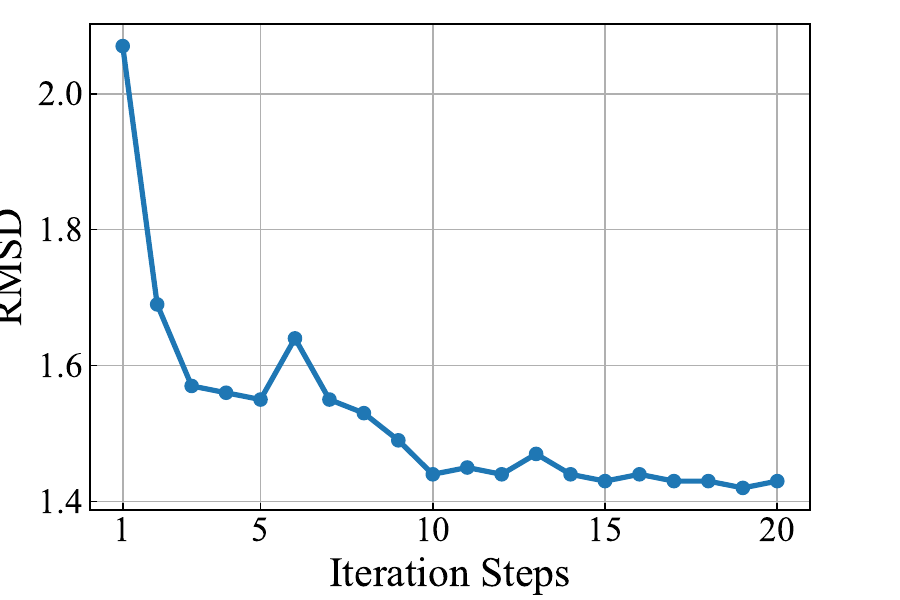}}
    \subfigure[]{\includegraphics[width=0.3\linewidth]{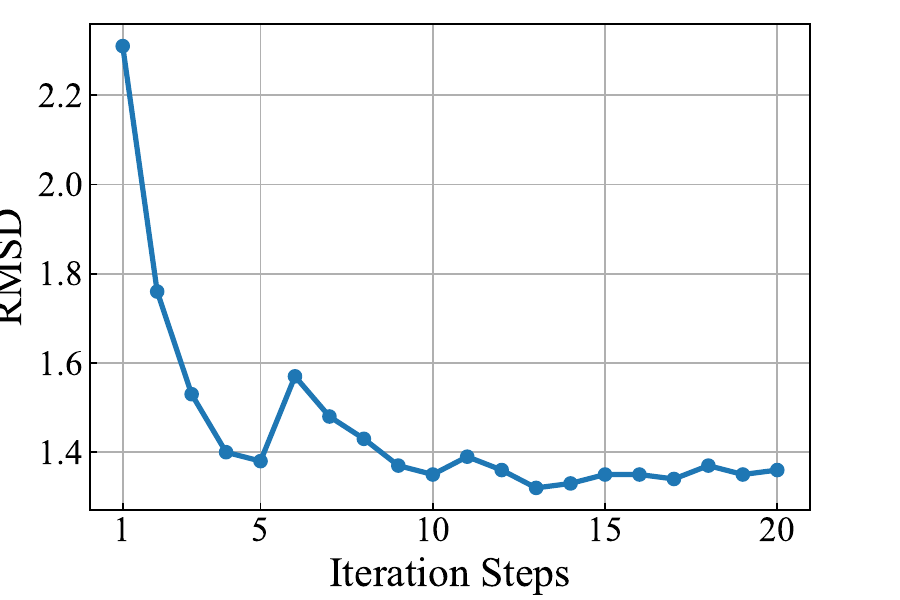}}
	\caption{(a) The ratio of unconverged residue types and AAR during the iterations of FAIR on Binding MOAD;
 (b) The convergence plot of RMSD on CrossDocked; (c) The convergence plot of RMSD on Binding MOAD.}
	\label{rmsd}
\end{figure*}
\subsection{Local Geometry of the Generated Pockets}
Since the RMSD used in Tables~\ref{tab:main} and \ref{tab:ablation} reflects only the correctness of global geometry in generated protein pockets, we also calculate the
RMSD of bond lengths and angles in the designed pocket to validate the local geometry following previous work \cite{kong2023conditional}. We measure the average RMSD of the covalent bond lengths in the pocket backbone (C-N, C=O, and C-C).
The three conventional dihedral angles in the
backbone structure, i.e., $\phi, \psi, \omega$ \cite{spencer2019stereochemistry} are considered, and the average RMSD of their cosine values are calculated.
The results in Table~\ref{local geometry} demonstrate that
FAIR still achieves much better performance regarding the local geometry than the baseline methods.

\label{app:more results}
\begin{table}[h]
\centering
\small
\caption{The average RMSD of bond lengths and dihedral angles of the generated backbone structures on the two datasets.}
\begin{tabular}{c|cc} \toprule
Models     & Bond lengths & Angles ($\phi, \psi, \omega$)\\ \midrule
PocketOptimizer      & 0.47   & 0.328   \\
DEPACT        & 0.68   & 0.360  \\
HSRN   & 0.76   & 0.437     \\
Diffusion  & 0.49    &0.344\\
MEAN     &0.44 &0.312   \\
FAIR      &\textbf{0.35} &\textbf{0.287}   \\
\bottomrule
\end{tabular}
\label{local geometry}
\end{table}

\subsection{Hyperparameter Analysis}
Here, we evaluate the influence of hyperparameter choices on FAIR performance. We focus on two hyperparameters, $T_1$ and $T_2$, for iterative refinements. In Figure~\ref{hyperparameter}, we observe that the performance of FAIR is stable when $T_1$ and $T_2$ are larger than 5 and 10, respectively, indicating the refinement processes gradually converge. We set $T_1$ and $T_2$ to 5 and 10 throughout our experiments.
\begin{figure*}[h]
	\centering
	\subfigure[]{\includegraphics[width=0.3\linewidth]{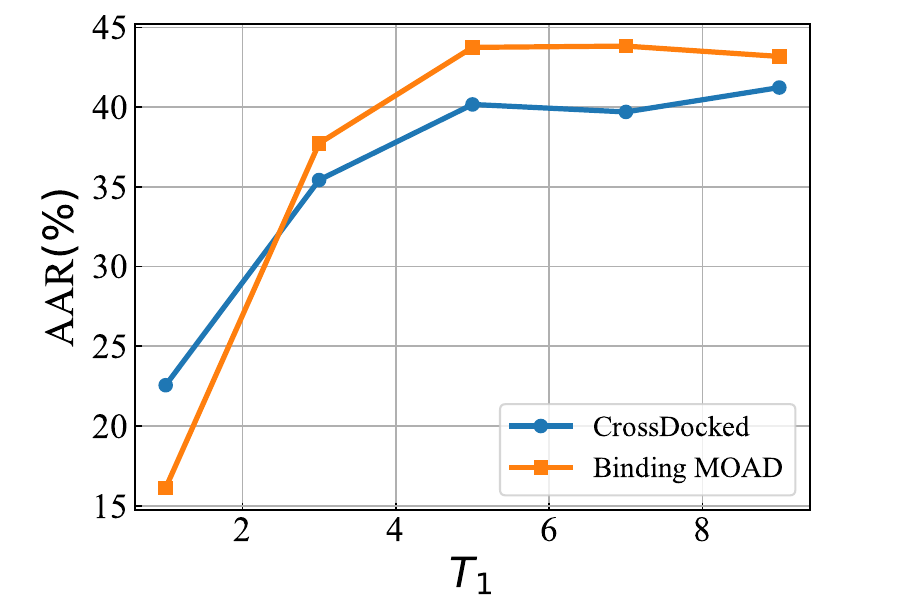}}
    \subfigure[]{\includegraphics[width=0.3\linewidth]{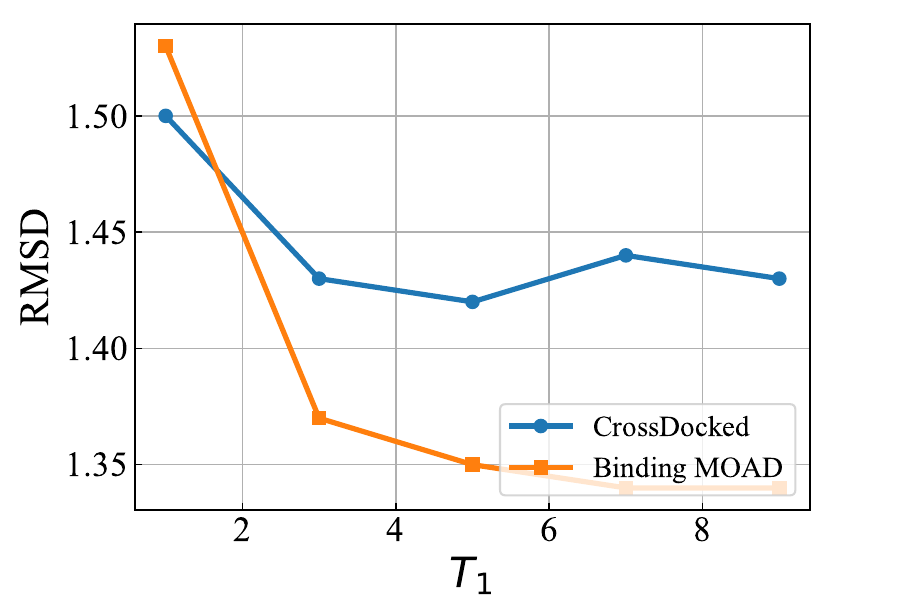}}
    \subfigure[]{\includegraphics[width=0.3\linewidth]{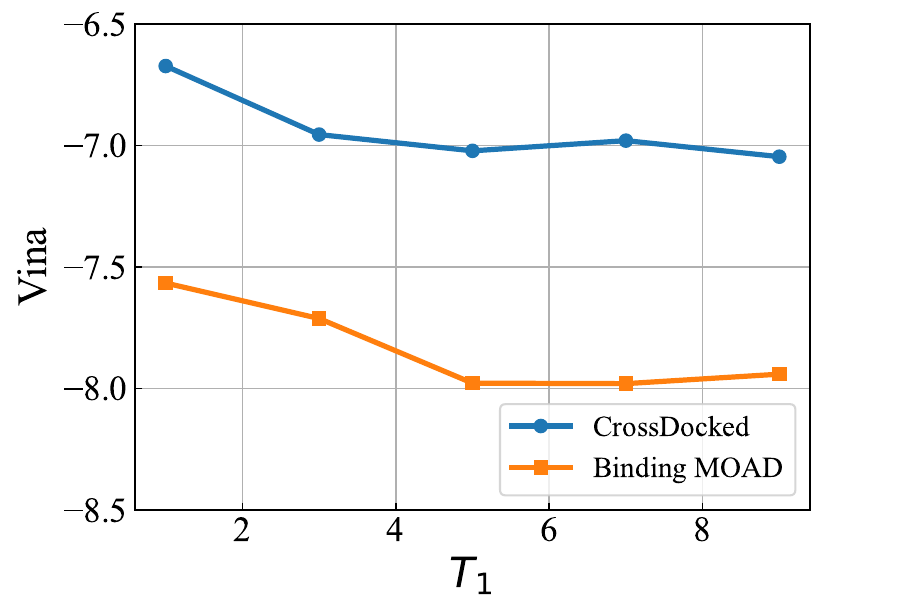}}\\
    \subfigure[]
    {\includegraphics[width=0.3\linewidth]{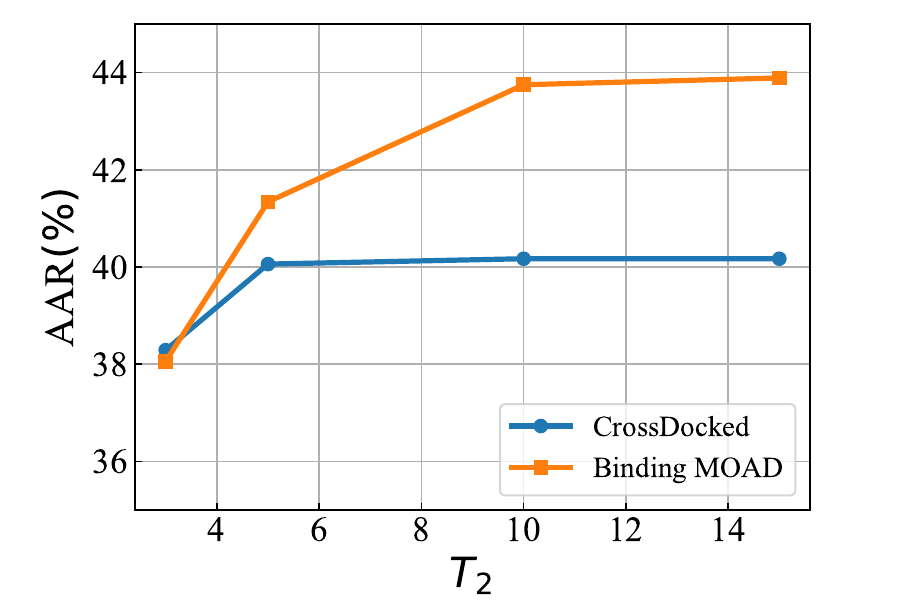}}
    \subfigure[]{\includegraphics[width=0.3\linewidth]{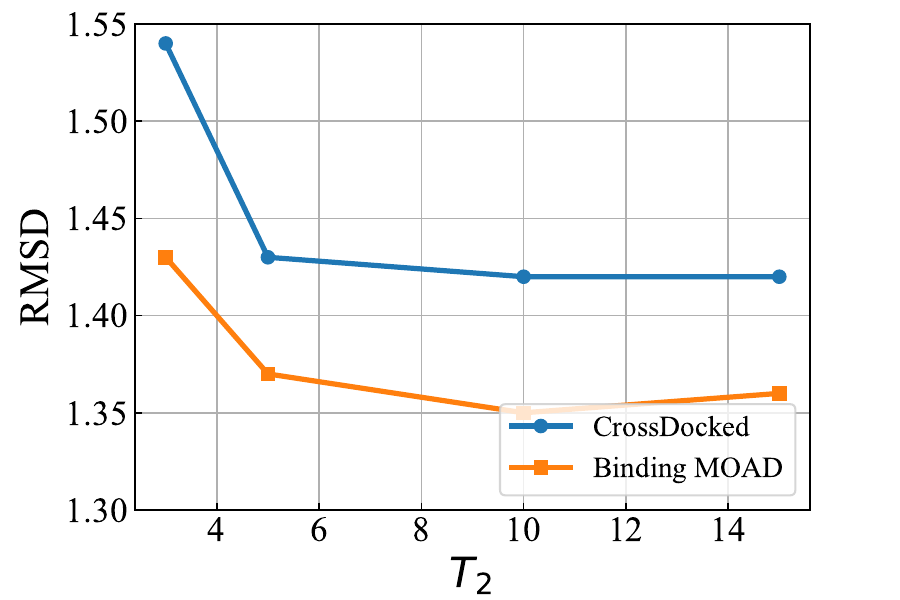}}
    \subfigure[]{\includegraphics[width=0.3\linewidth]{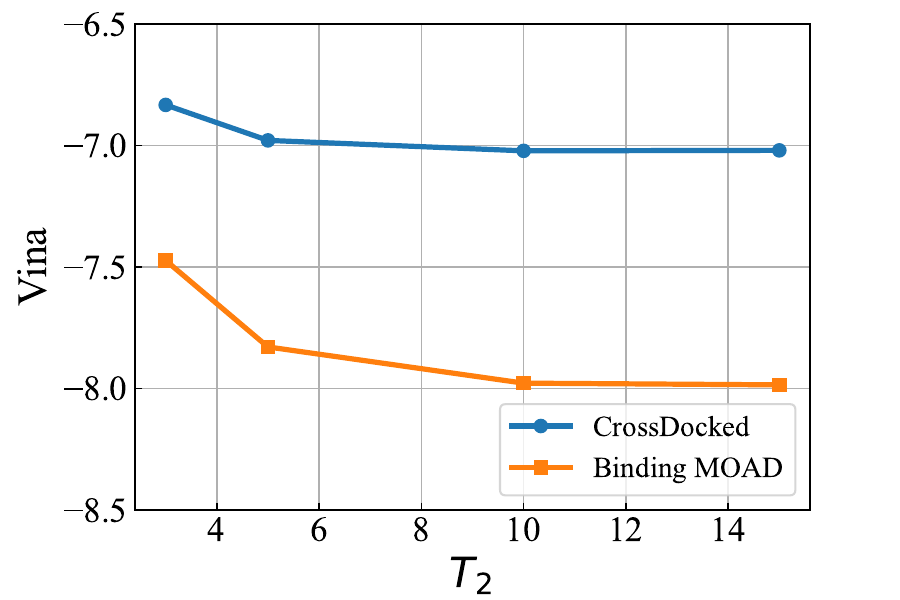}}
	\caption{Hyperparameter analysis with respect to $T_1$ (first row) and $T_2$ (second row).}
	\label{hyperparameter}
\end{figure*}

\end{document}